\begin{document}

\pagestyle{headings}  
\addtocmark{        Optimal Prefix Free Codes with Partial Sorting            }

\mainmatter              
\title{             Optimal Prefix Free Codes \\ with Partial Sorting}
\titlerunning{      Optimal Prefix Free Codes with Partial Sorting            } 
\author{J\'er\'emy Barbay}
\authorrunning{J\'er\'emy Barbay}   
%
\tocauthor{J\'er\'emy Barbay (Universdad de Chile)}
\institute{
  Departmento de Ciencias de la Computaci{\'o}n, \\
  University of Chile, \\
  \texttt{jeremy@barbay.cl}
}

\maketitle              
\begin{center} {\small \emph{(Full version)}}
\end{center}

 \begin{abstract}
We describe an algorithm computing an optimal prefix free code for $n$ unsorted positive weights in time within $O(n(1+\lg \alpha))\subseteq O(n\lg n)$, where the alternation $\alpha\in[1..n-1]$ measures the amount of sorting required by the computation.  This asymptotical complexity is within a constant factor of the optimal in the algebraic decision tree computational model, in the worst case over all instances of size $n$ and alternation $\alpha$.  Such results refine the state of the art complexity of $\Theta(n\lg n)$ in the worst case over instances of size $n$ in the same computational model, a landmark in compression and coding since 1952, by the mere combination of van Leeuwen's algorithm to compute optimal prefix free codes from sorted weights (known since 1976), with Deferred Data Structures to partially sort a multiset depending on the queries on it (known since 1988).
 \end{abstract}

\begin{center}
  \begin{minipage}{.9\textwidth}
    \noindent{\bf Keywords:} 
Deferred Data Structure,
Huffman,
Median,
Optimal Prefix Free Codes,
van Leeuwen.
  \end{minipage}
\end{center}

\providecommand{\id}[1]{\ensuremath{\mathit{#1}}}
\let\idit=\id
\providecommand{\idbf}[1]{\ensuremath{\mathbf{#1}}}
\providecommand{\idrm}[1]{\ensuremath{\mathrm{#1}}}
\providecommand{\idtt}[1]{\ensuremath{\mathtt{#1}}}
\providecommand{\idsf}[1]{\ensuremath{\mathsf{#1}}}
\providecommand{\idcal}[1]{\ensuremath{\mathcal{#1}}}  
\providecommand{\keypoint}[1]{{#1}} 
\providecommand{\llg}{\ensuremath{\log_2\log_2}}
\providecommand{\etal}{~\emph{et al.}}
\providecommand{\Oh}{\ensuremath{O}}
\providecommand{\entropy}{{\ensuremath\cal H}}
\providecommand{\redundancy}{{\ensuremath \cal R}}
\providecommand{\alternation}{{\ensuremath \alpha}}
\providecommand{\inputSize}{{\ensuremath n}}
\providecommand{\alphabetSize}{{\ensuremath k}}
\providecommand{\nbMessages}{{\ensuremath |T|}}
\providecommand{\nbWeights}{{\ensuremath n}}
\providecommand{\nbClusters}{\delta}
\providecommand{\nbDistinctWeights}{r}
\providecommand{\distribution}{{\ensuremath \Delta}}
\providecommand{\weightVector}{{\ensuremath W}}
\providecommand{\weight}[1]{{\ensuremath W[#1]}}
\providecommand{\weightVar}{{\ensuremath x}}
\providecommand{\sumWeights}{{\ensuremath U}}
\providecommand{\nbSymbols}{{\ensuremath D}}
\providecommand{\codeLength}[1]{{\ensuremath L[#1]}}
\providecommand{\nbPages}{{\ensuremath n}}
\providecommand{\nbCodeLengths}{{\ensuremath k}}
\providecommand{\minimalNbCodeLengths}{{\ensuremath \kappa}}
\providecommand{\partialSum}{{\ensuremath S}}
\providecommand{\rankVar}{{\ensuremath\rho}}
\providecommand{\median}{{\ensuremath\idtt{median}}}
\providecommand{\select}{{\ensuremath\idtt{select}}}
\providecommand{\partition}{{\ensuremath\idtt{Partition}}}
\providecommand{\rank}{{\ensuremath\idtt{rank}}}
\providecommand{\offset}{{\ensuremath\idtt{offset}}}
\providecommand{\size}{{\ensuremath\idtt{size}}}
\providecommand{\MinWeight}{{\ensuremath\idtt{min}}}
\providecommand{\MaxWeight}{{\ensuremath\idtt{max}}}
\providecommand{\LogSort}{{\ensuremath\idtt{LogSort}}}
\providecommand{\LogArray}{{\ensuremath\idtt{LIP}}}
\providecommand{\TopDown}{{\ensuremath\idtt{TopDown}}}
\providecommand{\DAryTopDown}{{\ensuremath\idtt{DAryTopDown}}}
\providecommand{\CodeLengthVariation}{{\ensuremath\idtt{Variation}}}
\providecommand{\NbWeights}{{\ensuremath\idtt{NbWeights}}}
\providecommand{\NbClusters}{{\ensuremath\idtt{NbClusters}}} 
\providecommand{\NbRoots}{{\ensuremath\idtt{NbRoots}}}
\providecommand{\NbTuples}{{\ensuremath\idtt{NbTuples}}}
\providecommand{\NbSymbols}{{\ensuremath\idtt{NbSymb}}}
\providecommand{\MinCodeLength}{{\ensuremath\idtt{MinLen}}}
\providecommand{\SplittingPoint}{{\ensuremath\idtt{SplitPoint}}}
\providecommand{\ClusterSize}{{\ensuremath\idtt{ClSize}}}
\providecommand{\MaxNbNodes}{{\ensuremath\idtt{MaxNbNodes}}}
\providecommand{\MinLeafWeight}{{\ensuremath\idtt{MinLeaf}}} 
\providecommand{\SumWeightsNodes}{{\ensuremath\idtt{SumWeights}}} 
\providecommand{\ClusterStart}{{\ensuremath\idtt{ClStart}}}
\providecommand{\ClusterThreshold}{{\ensuremath\idtt{ClThreshold}}}
\providecommand{\NbPairs}{{\ensuremath\idtt{NbPairs}}}
\providecommand{\PartialSum}{{\ensuremath\idtt{PartialSum}}}
\providecommand{\toto}{{\ensuremath\idtt{}}}
\providecommand{\lengthBound}{{\ensuremath B}}


\section{Introduction}
%
Given $\nbWeights$ \begin{LONG}positive\end{LONG} weights $\weight{1..\nbWeights}$ coding\begin{LONG}\footnote{We note $[i..j]=\{i,i+1,\ldots,j\}$ the integer range from $i$ to $j$, and $A[i..j]=\{A[i],A[i+1],\ldots,A[j]\}$ the set of values of an array $A$ indexed within this range.}\end{LONG} for the frequencies
\begin{LONG}
$\left\{{\weight{i}}/{\sum_{j=1}^\nbWeights\weight{j}}\right\}_{i\in[1..\nbWeights]}$
\end{LONG}
\begin{SHORT}
$\{{\weight{i}}/{\sum_{j=1}^\nbWeights\weight{j}}\}_{i\in[1..\nbWeights]}$
\end{SHORT}
of $\nbWeights$ messages\begin{LONG}\footnote{We use the terminology of \emph{messages} for the input and \emph{symbols} for the output, as introduced by Huffman~\cite{1952-IRE-AMethodForTheInstructionOfMinimumRedundancyCodes-Huffman}, which should not be confused with other terminologies found in the literature, of \emph{input symbols}, \emph{letters} or \emph{words} for the input and \emph{output symbols} or \emph{bits} in the binary case.}\end{LONG}, and a number $\nbSymbols$ of output symbols,
an \textsc{Optimal Prefix Free Code}~\cite{1952-IRE-AMethodForTheInstructionOfMinimumRedundancyCodes-Huffman} is a set of $\nbWeights$ code strings on alphabet $[1..\nbSymbols]$, of variable lengths $\codeLength{1..\nbWeights}$ and such that no string is prefix of another, and the average length of a code is minimized (i.e. $\sum_{i=1}^\nbWeights\codeLength{i}\weight{i}$ is minimal).
\begin{LONG}
The particularity of such codes is that even though the code strings assigned to the messages can differ in lengths (assigning shorter ones to more frequent messages yields compression to $\sum_{i=1}^\nbWeights\codeLength{i}\weight{i}$ symbols), the prefix free property insures a non-ambiguous decoding.
 
\end{LONG}
\begin{LONG}
Such optimal codes, known since
1952~\cite{1952-IRE-AMethodForTheInstructionOfMinimumRedundancyCodes-Huffman},
are used in ``\emph{all the mainstream compression  formats}''~\cite{2006-IEEE-LowPowerHuffmanCodingForHighPerformanceDataTransmission-Chen}
(e.g.  \texttt{PNG}, \texttt{JPEG}, \texttt{MP3}, \texttt{MPEG}, \texttt{GZIP} and \texttt{PKZIP}).
The concept is ``\emph{one of the {fundamental ideas that people} in computer science and data communications {are using all the time}}'' (Knuth~\cite{2010-BOOK-DiscreteMathematics-Chandrasekaran}), and the code itself is ``\emph{one of the {enduring techniques of data compression}. It was used in the venerable PACK compression program, authored by Szymanski in 1978, and {remains no less popular today}}'' (Moffat\etal~\cite{1997-IEEE-OnTheImplementstionOfMinimumRedundsncyPrefixCodes-MoffatTurpin} in 1997).
\end{LONG}

\begin{LONG}
\subsection{Previous works}\label{sec:previous-works}
\end{LONG}

%
Any prefix free code can be computed in linear time from a set of code lengths satisfying the Kraft inequality $\sum_{i=1}^\nbWeights\nbSymbols^{-\codeLength{i}}\leq1$.
The original description of the code by Huffman~\cite{1952-IRE-AMethodForTheInstructionOfMinimumRedundancyCodes-Huffman} yields a heap-based algorithm performing $O(\nbWeights\log\nbWeights)$ algebraic operations, using the bijection between $\nbSymbols$-ary prefix free codes and $\nbSymbols$-ary cardinal trees~\cite{2012-Book-GraphAlgorithms-EvenEven}.
This complexity is asymptotically optimal for any constant value of $\nbSymbols$ in the algebraic decision tree model, in the worst case over instances composed of $\nbWeights$ positive weights\begin{LONG}, as computing the optimal binary prefix free code for the weights $\weight{0,\ldots,\nbSymbols\nbWeights}=\{\nbSymbols^{x_1},\ldots,\nbSymbols^{x_1},\nbSymbols^{x_2},\ldots,\nbSymbols^{x_2},\ldots,\nbSymbols^{x_\nbWeights},\ldots,\nbSymbols^{x_\nbWeights}\}$ is equivalent to sorting the positive integers $\{x_1,\ldots,x_\nbWeights\}$
\end{LONG}. We consider here only the binary case, where $D=2$. 
Not all instances require the same amount of work to compute an optimal code \begin{LONG} (see Table~\ref{tab:previousResults} for a partial list of relevant results)\end{LONG}:
\begin{itemize}

\item When the weights are given in sorted order, van Leeuwen~\cite{1976-ICALP-OnTheConstructionOfHuffmanTrees-Leeuwen} showed that an optimal code can be computed using within $O(\nbWeights)$ algebraic operations.

\item When the weights consist of $r\in[1..n]$ distinct values and are given in a sorted, compressed form, Moffat and Turpin~\cite{1998-TIT-EfficientConstructionOfMinimumRedundancyCodesForLargeAlphabets-MoffatTurpin} showed how to compute an optimal code using within $O(r(1+\log(\nbWeights/r)))$ algebraic operations, which is often sublinear in $n$.

\item 
  In the case where the weights are given unsorted, Belal\etal~\cite{2006-STACS-DistributionSensitiveConstructionOfMinimumRedundancyPrefixCodes-BelalElmasry,2006-IEEE-VerificationOfMinimumRedundancyPrefixCodes-BelalElmasry}  described
  \begin{LONG}
  several families of instances for which an optimal prefix free code can be computed in linear time, along with 
  \end{LONG}
  an algorithm claimed to perform $O(\nbCodeLengths\nbWeights)$ algebraic operations, in the worst case over instances formed by $\nbWeights$ weights such that there is an optimal binary prefix free code with $\nbCodeLengths$ distinct code lengths\begin{LONG}\footnote{Note that $k$ is not uniquely defined, as for a given set of weights there can exist several optimal prefix free codes varying in the number of distinct code lengths used.}\end{LONG}.
  \begin{LONG}
  This complexity was later downgraded to $O(16^k n)$ in an extended version\cite{2005-ARXIV-DistributionSensitiveConstructionOfMinimumRedundancyPrefixCodes-BelalElmasry} of their article. Both results are better than the state of the art when $k$ is finite, but worse when $k$ is larger than $\log n$.
  \end{LONG}

\end{itemize}
\begin{LONG}
\begin{table}
\centering
\begin{tabular}{cp{4cm}llcll}
Year & Name                          & Time                           & Space           & Ref.                                                                                           & Note                         \\ \hline
1952 & Huffman                       & $O(\nbWeights\log \nbWeights)$ & $O(\nbWeights)$ & \cite{1952-IRE-AMethodForTheInstructionOfMinimumRedundancyCodes-Huffman}                       & original                     \\ \hline
1976 & van Leeuwen                   & $O(\nbWeights)$                & $O(\nbWeights)$ & \cite{1976-ICALP-OnTheConstructionOfHuffmanTrees-Leeuwen}                                      & Sorted Input                 \\ \hline
1995 & Moffat and Katajainen         & $O(\nbWeights)$                & $O(1)$          & \cite{1995-WADAS-InPlaceCalculationOfMinimumRedundancyCodes-MoffatKatajainen}                  & Sorted Input                 \\ \hline
1998 & Moffat and Turpin             & $O(r(1+\log(\nbWeights/r)))$   & "efficient"     & \cite{1998-TIT-EfficientConstructionOfMinimumRedundancyCodesForLargeAlphabets-MoffatTurpin}    & Compressed Input/Output      \\ \hline
2006 & Belal and Elmasry             & $O(\log^{2k-1} \nbWeights)$    &                 & \cite{2006-STACS-DistributionSensitiveConstructionOfMinimumRedundancyPrefixCodes-BelalElmasry} & Sorted Input                 \\ 
2006 & Belal and Elmasry             & $O(k\nbWeights)$ claimed       & $O(\nbWeights)$ & \cite{2006-STACS-DistributionSensitiveConstructionOfMinimumRedundancyPrefixCodes-BelalElmasry} & $k$ distinct code lengths    \\
2006 & Belal and Elmasry             & $O(16^k\nbWeights)$ proved     & $O(\nbWeights)$ & \cite{2005-ARXIV-DistributionSensitiveConstructionOfMinimumRedundancyPrefixCodes-BelalElmasry} & $k$ distinct code lengths    \\ \hline
2016 & {{Grouping-Docking-Mixing}} & $O(n(1+\log\alpha))$           & $O(\nbWeights)$ & [here]                                                                                     & $\alpha=|S|_{EI}\in[1..n-1]$ \\ \hline
\end{tabular}
\caption{A selection of results on the computational complexity of optimal prefix free codes.  %
$k$ is the number of distinct codelengths produced. %
$\alpha=|S|_{EI}\in[1..n-1]$ is a difficulty measure, the number of alternation between External nodes and Internal nodes in an execution of van Leeuwen~\cite{1976-ICALP-OnTheConstructionOfHuffmanTrees-Leeuwen}'s algorithm.
Note that there can be various optimal codes for any given set of weights, each with a distinct number of distinct code lengths $k$.
}
\label{tab:previousResults}
\end{table}
\end{LONG}

\begin{LONG}
\subsection{Contributions}\label{sec:contributions}
\end{LONG}

\begin{LONG}
In the context described above, various questions are left unanswered, from the confirmation of the existence of an algorithm running in time $O(16^k\nbWeights)$ or $O(k\nbWeights)$, to the existence of an algorithm taking advantage of small values of both $n$ and $k$, less trivial than running two algorithms in parallel and stopping both whenever one computes the answer.
\end{LONG}
%
Given $\nbWeights$ positive integer weights, \emph{can we compute an optimal binary prefix free code in time better than $O(\min\{\nbCodeLengths\nbWeights,\nbWeights\log\nbWeights\})$ in the algebraic model?  }
We answer in the affirmative for many classes of instances, identified by the alternation measure $\alpha$ defined in Section \ref{sec:alternation}:
\begin{theorem}
Given $\nbWeights$ positive weights of alternation $\alternation\in[1..n-1]$, there is an algorithm which computes an optimal binary prefix free code using within $O(n(1{+}\log \alpha))$ $\subseteq O(n\lg n)$ algebraic instructions, and this complexity is asymptotically optimal among all algorithms in the algebraic decision tree  computational model in the worst case over instances of size $n$ and alternation $\alpha$.
\end{theorem}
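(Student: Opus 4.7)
The plan is to prove the theorem in two parts, a matching upper bound and lower bound, both turning on the alternation measure $\alpha$. For the upper bound, I would combine van Leeuwen's 1976 algorithm with a Deferred Data Structure of the Karp--Motwani--Raghavan style (1988). Recall that van Leeuwen's algorithm maintains two FIFO queues: an external queue holding the original weights in sorted order, and an internal queue holding the newly-created merged nodes, which by a well-known invariant arrive in nondecreasing order. The algorithm repeatedly extracts the two smallest remaining elements from the union of these queues and inserts their sum into the internal queue. Since $\alpha$ is (by the definition to be given in Section~\ref{sec:alternation}) precisely the number of maximal runs during which the extracted minimum switches between the two queues, the external queue is accessed in only $O(\alpha)$ contiguous runs of successive-minimum queries.

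For the upper-bound argument, I would replace the explicit pre-sorting of the external queue by a Deferred Data Structure supporting successive \texttt{extract-min} queries. The amortized cost of answering $\alpha$ such runs covering a total of at most $n$ elements is $O(n(1+\log \alpha))$ algebraic operations: intuitively, each quickselect step charges $O(n)$ to locate a pivot, but only $O(\log \alpha)$ such pivots are needed to isolate the $\alpha$ runs. Feeding these extracted values into van Leeuwen's otherwise linear-time algorithm, and observing that the internal queue is handled in $O(n)$ time on its own, I obtain the claimed $O(n(1+\log \alpha))$ bound.

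For the matching lower bound, I would use the classical information-theoretic argument in the algebraic decision tree model, applied to a carefully constructed family of instances of size $n$ and alternation exactly $\alpha$. The idea is to design $\alpha$ ``hard clusters'' of roughly $n/\alpha$ weights each, whose relative order inside each cluster is irrelevant to the code but whose inter-cluster ordering is forced by the execution of van Leeuwen. Any correct algorithm must distinguish among the $\alpha!\,\bigl((n/\alpha)!\bigr)^{\alpha}$-class partition, giving $\Omega\bigl(\log\frac{n!}{((n/\alpha)!)^{\alpha}}\bigr)=\Omega(n\log\alpha)$ comparisons, to which the trivial $\Omega(n)$ cost of reading the input is added.

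The main obstacle I expect lies in the precise definition and behavior of the alternation measure $\alpha$. One must ensure (i) that $\alpha$ does not depend on the tie-breaking choices or on which optimal code is selected, so the statement ``instances of alternation $\alpha$'' is well-posed; (ii) that van Leeuwen's algorithm actually accesses the external queue in only $O(\alpha)$ contiguous runs when fed by the deferred structure, which requires showing that the behaviour of the two-queue algorithm is robust to the partial-sort refinement; and (iii) that the lower-bound family can be built with prescribed alternation $\alpha$ without accidentally trivializing the optimal code. The cleanest route for (iii) is probably a reduction from multiset partial sorting into $\alpha$ buckets of equal size, whose $\Omega(n \log \alpha)$ lower bound in the algebraic decision tree model is standard.
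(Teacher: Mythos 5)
Your lower-bound sketch is essentially the paper's: reduce from sorting a multiset of $n$ elements drawn from $\alpha$ distinct values to computing an optimal prefix free code, by mapping each value $x_i$ to the weight $2^{x_i}$; the resulting instance then has alternation exactly $\alpha$, and reading off the code by codelength recovers the sorted order, yielding the classical $\Omega(n\log\alpha)$ bound in the algebraic decision tree model. That part is fine.

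Your upper-bound argument, however, has a genuine gap. You propose to run van Leeuwen's two-queue algorithm unchanged, replacing the pre-sorted external queue with a deferred data structure supporting successive extract-min. But van Leeuwen's algorithm extracts every external weight individually, so this issues $n$ separate \texttt{select} queries; a Karp--Motwani--Raghavan-style structure answering $n$ queries on $n$ elements costs $\Theta(n\log n)$ and in effect fully sorts the input, regardless of $\alpha$. The fact that those $n$ queries fall into $O(\alpha)$ contiguous runs does not help under the $O(n(1+\log q))$ accounting, since $q=n$; and your claim that ``only $O(\log\alpha)$ pivots are needed to isolate the $\alpha$ runs'' is not correct either---locating $\alpha$ run boundaries already requires on the order of $\alpha\log\frac{n}{\alpha}$ pivot selections, and the real difficulty is avoiding per-element work \emph{inside} each run. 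The paper avoids this by not running van Leeuwen verbatim: its GDM algorithm replaces a whole run of external extractions by a single \texttt{rank} query that locates the run boundary, then pairs \emph{positions} (not values) into internal nodes so that the weight of a ``pure'' internal node (all descendants grouped in the same phase) is never computed eagerly but deferred to a single \texttt{partialSum} query over its index range, issued only when comparison with a competing node actually forces it. This batching is what brings the query count down to $O(\alpha(1+\log\frac{n}{\alpha}))$, from which the $O(n(1+\log\alpha))$ running time follows after plugging into the data-structure cost and a small asymptotic simplification. You also do not notice that Karp et al.'s structure supports only \texttt{rank} and \texttt{select} in the comparison model and must be extended to support \texttt{partialSum} (leaving the comparison model for the algebraic decision tree model), which is precisely what enables the deferred computation of internal-node weights. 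Your concern (i) about well-posedness of $\alpha$ is legitimate but is handled by defining $\alpha$ directly from the execution trace (the ``van Leeuwen signature'') of a fixed reference algorithm rather than from any particular optimal code.
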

\begin{proof}
\begin{LONG}
We describe in Lemma~\ref{result:dds} a deferred data structure which supports $q$ queries of type \texttt{rank}, \texttt{select} and \texttt{partialSum} in time within $O(n(1+\lg q))$, all within the algebraic computational model, and describe in Section~\ref{sec:algorithm} an algorithm using such a data structure to compute optimal prefix free codes given an unsorted input.
\end{LONG}
We show in Lemma~\ref{result:lowerBoundWorstCase} that any algorithm $A$ in the algebraic computational model performs within $\Omega(n\lg \alpha$) algebraic operations in the worst case over instances of size $\nbWeights$ and alternation~$\alternation$.
We show in Lemma~\ref{result:upper-boundQueries} that the \texttt{GDM} algorithm, a variant of the van Leeuwen's algorithm~\cite{1976-ICALP-OnTheConstructionOfHuffmanTrees-Leeuwen}, modified to use the deferred data structure from Lemma~\ref{result:dds}, performs $q\in O(\alpha(1+\lg\frac{n-1}{\alpha}))$ such queries, which yields in Corollary~\ref{result:upper-boundOperations} a complexity within $O(n(1{+}\log \alpha) + \alpha(\lg n)(\lg\frac{n}{\alpha}))$, all within the algebraic computational model. 
As $\alpha\in[1..n{-}1]$ and $O(\alpha(\lg n)(\lg\frac{n}{\alpha}))\subseteq O(n(1{+}\log \alpha))$ for this range (Lemma~\ref{result:asymptotics}), the optimality ensues.  \qed
\end{proof}

\begin{LONG}
When $\alpha$ is at its maximal (i.e. $\alpha = n{-}1$), this complexity matches the tight computational complexity bound of $\Theta(n\lg n)$ for algebraic algorithms in the worst case over all instances of size $n$.
When $\alpha$ is substantially smaller than $n$ (e.g. $\alpha\in O(\lg n)$), the \texttt{GDM} algorithm performs within $o(n\lg n)$ operations, down to linear in $n$ for finite values of~$\alpha$.
\end{LONG}

We discuss our solution in Section~\ref{sec:solution} in three parts: the intuition behind the general strategy in Section \ref{sec:general-intuition}, the deferred data structure which maintains a partially sorted list of weights while supporting \texttt{rank}, \texttt{select} and \texttt{partialSum} queries in Section~\ref{sec:dds}, and the algorithm which uses those operators to compute an optimal prefix free code in Section~\ref{sec:algorithm}.  Our main contribution consists in the analysis of the running time of this solution, described in Section~\ref{sec:analysis}: the formal definition of the parameter of the analysis in Section~\ref{sec:alternation}, the upper bound in Section \ref{sec:upper-bound} and the matching lower bound in Section \ref{sec:lower-bound}.  We conclude with a comparison of our results with those from Belal\etal~\cite{2006-STACS-DistributionSensitiveConstructionOfMinimumRedundancyPrefixCodes-BelalElmasry} in Section~\ref{sec:discussion}.

\section{Solution} \label{sec:solution}

\begin{LONG}
The solution that we describe is a combination of two results: some results about deferred data structures for multisets, which support queries in a ``lazy'' way; and some results about optimal prefix free codes themselves, about the relation between the computational cost of sorting a set of positive integers and the computational cost of computing an optimal prefix free code for the corresponding frequency distribution.  We describe the general intuition of our solution in Section \ref{sec:general-intuition}, the deferred data structure in Section~\ref{sec:dds}, and the algorithm in Section~\ref{sec:algorithm}.
\end{LONG}

\subsection{General Intuition}\label{sec:general-intuition}

Observing that the algorithm suggested by Huffman~\cite{1952-IRE-AMethodForTheInstructionOfMinimumRedundancyCodes-Huffman} always creates the internal nodes in increasing order of weight, van Leeuwen~\cite{1976-ICALP-OnTheConstructionOfHuffmanTrees-Leeuwen} described an algorithm to compute optimal prefix free codes in linear time when the input (i.e. the weights of the external nodes) is given in sorted order.

A close look at the execution of van Leeuwen's algorithm~\cite{1976-ICALP-OnTheConstructionOfHuffmanTrees-Leeuwen} reveals a sequence of \texttt{sequential searches} for the insertion rank $r$ of the weight of an internal node in the list of weights of external nodes.  Such sequential search could be replaced by a more efficient search algorithm\begin{LONG} in order to reduce the number of comparisons performed (e.g. a \texttt{doubling search}~\cite{1976-IPL-AnAlmostOptimalAlgorithmForUnboundedSearching-BentleyYao} would find such a rank $r$ in $2\lceil\log_2 r\rceil$ comparisons)\end{LONG}.

\begin{example}
Consider an instance of the optimal prefix free code problem formed by $\nbWeights$ \emph{sorted} positive weights $\weight{1..\nbWeights}$ such that the first internal node created is bigger than the largest weight (i.e. $W[1]+W[2] > W[n]$).  On such an instance, van Leeuwen's algorithm~\cite{1976-ICALP-OnTheConstructionOfHuffmanTrees-Leeuwen} starts by performing $n-2$ comparisons in the equivalent of a \texttt{sequential search} in $W$ for $W[1]{+}W[2]$: a \texttt{binary search} would perform $\lceil\log_2 n\rceil$ comparisons instead.
\end{example}

Of course, any algorithm must access (and sum) each weight at least once in order to compute an optimal prefix free code for the input, so that reducing the number of comparisons does not reduce the running time of van Leeuwen's algorithm on a sorted input. Our claim is that in the case where the input is not sorted, the computational cost of optimal prefix free codes on instances where van Leeuwen performs long sequential searches can be greatly reduced. We define the ``van Leeuwen signature'' of an instance as a first step to characterize  such instances:

\begin{definition}
Given an instance of the optimal prefix free code problem formed by $\nbWeights$ positive weights $\weight{1..\nbWeights}$, its \emph{van Leeuwen signature} ${\cal S}(W)\in \{E,I\}^{2n-1}$ is a string of length $2n-1$ over the alphabet $\{E,I\}$ (where $E$ stands for ``External'' and $I$ for ``Internal'') marking, at each step of the algorithm described by van Leeuwen~\cite{1976-ICALP-OnTheConstructionOfHuffmanTrees-Leeuwen}, whether an external or internal node is chosen as the minimum (including the last node returned by the algorithm, for simplicity).
\end{definition}

\begin{example}
Given the sorted array $W=\begin{array}{|*{8}{c|}}\hline 1&2&3&4&5&5&6&7 \\ \hline\end{array}$ of length $8$, 
its \emph{van Leeuwen signature} is of length $15$, starts with $EE$ and finishes with $I$: ${\cal S}(W) = \mathtt{EEEIEEEEIEIIIII}$.
\end{example}

The analysis described in Section~\ref{sec:analysis} is based on the number of blocks formed only of $E$ in the van Leeuwen signature of the instance $S$. We can already show some basic properties of this measure:

\begin{lemma}
Given the van Leeuwen signature $S$ of $n$ unsorted positive weights $W[1..n]$,
 $|S|_E = n$;
 $|S|_I = n-1$;
 $|S| = 2n-1$;
$S$ starts with two $E$;
$S$ finishes with one $I$;
 $|S|_{EI}=|S|_{IE}+1$; 
$|S|_{EI}\in[1..n-1]$.
\end{lemma}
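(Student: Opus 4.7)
The plan is to treat the seven claims as a chain where each one either reads off from the definition of van Leeuwen's algorithm or follows from a short combinatorial argument about strings over $\{E,I\}$. I would start by fixing the underlying picture: van Leeuwen's algorithm processes a priority queue whose elements are the $n$ external nodes (the input weights) together with the internal nodes it creates one by one, and the signature records, for each of the $2n-1$ extractions of a minimum from that queue, whether an external or an internal node was chosen.

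First I would dispatch claims (1)-(3) by pure bookkeeping. Every external node is extracted exactly once (contributing $|S|_E=n$), and the algorithm creates and extracts exactly $n-1$ internal nodes because the final Huffman tree has $n$ leaves and hence $n-1$ internal nodes (giving $|S|_I=n-1$ and $|S|=2n-1$). For (4), at the first two extractions the queue contains only external nodes since no internal node has been created yet, so the first two symbols are forced to be $E$; for (5), the very last extraction produces the root of the Huffman tree, which for $n\geq 2$ is an internal node, so the last symbol is $I$.

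The only step with real content is (6). I would prove it by grouping $S$ into maximal runs of equal letters, so that the run-types alternate between $E$-blocks and $I$-blocks. By (4) the first run is an $E$-block, and by (5) the last run is an $I$-block, hence the number of $E$-blocks equals the number of $I$-blocks; call this common value $k$. An $EI$-transition occurs exactly at the right boundary of each $E$-block, giving $|S|_{EI}=k$, while an $IE$-transition occurs at the right boundary of each $I$-block other than the final one, giving $|S|_{IE}=k-1$. Claim (7) then follows: $|S|_{EI}\geq 1$ since $S$ contains at least one $E$ followed eventually by an $I$ (the last symbol), and for the upper bound I would observe that the first $E$-block has length at least $2$ by (4), so the total $E$-count satisfies $n=|S|_E\geq 2+(k-1)=k+1$, yielding $k=|S|_{EI}\leq n-1$.

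I do not foresee a real obstacle; the only mild subtlety is making the run-decomposition argument for (6) precise enough that both the ``equal number of $E$- and $I$-blocks'' step and the ``one fewer $IE$ than $EI$'' step are unambiguously justified by the boundary letters fixed in (4) and (5). Everything else is a direct consequence of the definitions and basic counting on the Huffman tree.
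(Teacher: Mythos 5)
Your proof is correct and takes essentially the same route as the paper, which merely asserts that (1)--(3) follow from basic binary-tree counting, (4)--(5) from the mechanics of the algorithm, and (6)--(7) "are simple consequences of the fact that $S$ is a binary string starting with an $E$ and finishing with an $I$"; your run-decomposition argument is exactly the intended unpacking of that last sentence. One small remark: for the upper bound $|S|_{EI}\leq n-1$ you invoke the ``starts with two $E$'' fact to get $n=|S|_E\geq k+1$, whereas one can get the same bound slightly more directly from $|S|_I=n-1\geq k$ (each of the $k$ $I$-blocks has length at least one), which is likely the route the paper's terse justification had in mind.
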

\begin{lproof}
The three first properties are simple consequences of basic properties on binary trees.
$S$ starts with two $E$ as the first two nodes paired are always external.
$S$ finishes with one $I$ as the last node returned is always (for $n>1$) an internal node.
The two last properties are simple consequences of the fact that $S$ is a binary string starting with an $E$ and finishing with an $I$. 
\end{lproof}

Instances with very few blocks of $E$ are easier to solve than instances with many such blocks. For instance, an instance $W$ of length $n$ such that its signature ${\cal S}(W)$ is composed of a single run of $n$ $E$s followed by a single run of $n-1$ $I$s can be solved in linear time, and in particular without sorting the weights: it is enough to assign the codelength $l=\lfloor \log_2 n\rfloor$ to the $n-2^l$ largest weights and the codelength $l+1$ to the $2^l$ smallest weights. Separating those weights is a simple \texttt{select} operation, supported by the data structures described in the following section.

\subsection{Partial Sum Deferred Data Structure}
\label{sec:dds}

Given a \textsc{Multiset} $W[1..n]$ on alphabet $[1..\sigma]$ of size $n$, Karp\etal~\cite{1988-JC-DeferredDataStructuring-KarpMotwaniRaghavan} defined the first deferred data structure supporting for all $x\in[1..\sigma]$ and $r\in[1..n]$ queries such as \texttt{rank}$(x)$, the number of elements which are strictly smaller than $x$ in $W$; and \texttt{select}$(r)$, the value of the $r$-th smallest value (counted with multiplicity) in $W$.  Their data structure supports $q$ queries in time within $O(n(1+\lg q))$, all in the comparison model.
\begin{LONG}
To achieve this results, it partially sorts its data in order to minimize the computational cost of future queries, but avoids sorting all of the data if the queries don't require it: the queries have then become operators (they modify the data). Note that whereas the running time of each individual query depends on the state of the data, the answer to each query is independent of the state of the data.
\end{LONG}

Karp\etal's data structure~\cite{1988-JC-DeferredDataStructuring-KarpMotwaniRaghavan} supports only \texttt{rank} and \texttt{select} queries in the comparison model, whereas the computation of optimal prefix free codes requires to sum pairs of weights from the input, and the algorithm that we propose in Section~\ref{sec:algorithm} requires to sum weights from a range in the input. Such requirement can be reduced to \texttt{partialSum} queries. Whereas such queries have been defined in the literature, we define them here in a way that depends only on the content of the \textsc{Multiset} (as opposed to a definition dpending on the order in which it is given), so that it can be generalized to deferred data structures.

\begin{definition}
Given $n$ unsorted positive weights $W[1..n]$, a \texttt{Partial Sum} data structure supports the following queries:
\begin{LONG}
\begin{itemize}
\item \texttt{rank}$(x)$, the number of elements which are strictly smaller than $x$ in $W$;
\item \texttt{select}$(r)$, the value of the $r$-th smallest value (counted with multiplicity) in $W$;
\item \texttt{partialSum}$(r)$, the sum of the $r$ smallest elements (counted with multiplicity) in $W$.
  \end{itemize}
  \end{LONG}
  \begin{SHORT}
  \texttt{rank}$(x)$, the number of elements which are strictly smaller than $x$ in $W$; \texttt{select}$(r)$, the value of the $r$-th smallest value (counted with multiplicity) in $W$; \texttt{partialSum}$(r)$, the sum of the $r$ smallest elements (counted with multiplicity) in $W$.
  \end{SHORT}
\end{definition}

\begin{example}
Given the array $A=\begin{array}{|*{8}{c|}}\hline5&3&1&5&2&4&6&7 \\ \hline\end{array}$, \texttt{rank}$(5)=4$, \texttt{select}$(6)=5$, and \texttt{partialSum}$(2)=3$.
\end{example}

We describe below how to extend Karp\etal's deferred data structure~\cite{1988-JC-DeferredDataStructuring-KarpMotwaniRaghavan}, which supports \texttt{rank} and \texttt{select} queries on \textsc{Multisets}, in order to add  the support for \texttt{partialSum} queries, with an amortized running time within a constant factor of the original asymptotic time. Note that the data structure is not performing any more in the comparison model, but rather in the algebraic decision tree model, since it performs algebraic operations (additions) on the elements of the \textsc{Multiset}:

\begin{lemma} \label{result:dds}
Given $n$ unsorted positive weights $W[1..n]$, there is a \texttt{PartialSum} Deferred Data Structure which supports $q$ operations of type \texttt{rank}, \texttt{select} and \texttt{partialSum} in time within $O(n(1+\lg q)+q(1+\log n))$, all within the algebraic decision tree computational model.
\end{lemma}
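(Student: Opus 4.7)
The plan is to augment Karp et al.'s deferred data structure with partial sum information stored at every node of its lazily built partition tree. Recall that their structure maintains a binary tree whose root represents all $n$ weights and whose internal nodes each store a pivot that splits the node's weights into those at most the pivot (routed to the left child) and those greater (routed to the right); at any moment, some leaves are still unpartitioned blocks of weights, and both \texttt{rank} and \texttt{select} queries are answered by descending from the root and refining any block encountered along the way by choosing its median as the next pivot and applying a linear-time partition. I would extend this structure by storing at each node $v$ two additional fields: its size $|v|$ and the sum $\sigma(v)$ of all weights it contains. Both fields are initialised at the root in $O(n)$ time and then propagated to every newly created pair of children during each partition step, at no asymptotic overhead, since the partition itself already performs $\Theta(|v|)$ work.

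To answer $\texttt{partialSum}(r)$, I would perform the same descent as a $\texttt{select}(r)$ query, but along the way accumulate $\sigma(\ell)$ for every left subtree $\ell$ entirely contained among the $r$ smallest elements (equivalently, every left child of a node on the descent path whose cumulative size does not exceed $r$). If the descent reaches a leaf whose block straddles the rank $r$, I would trigger further refinement on that branch, exactly as $\texttt{select}(r)$ would, until position $r$ falls on a node boundary; at that point every contributing sum has been gathered and the accumulated value is returned.

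For the analysis, I would argue that the refinement work across $q$ arbitrary queries is identical to that in Karp et al.'s original structure, because augmenting with sizes and sums only adds $O(|v|)$ time to each partition, absorbed in the existing cost. Karp et al.'s amortised analysis (a partial quicksort whose tree acquires depth $O(\lg q)$ along the queried paths) then yields the $O(n(1+\lg q))$ bound on refinement. On top of that, each query pays for its own descent from the root plus a constant-work accumulation per node visited, which is $O(\log n)$ in a fully refined tree, giving the additive $O(q(1+\log n))$ term. Summing both contributions yields the claimed total.

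The main obstacle I expect is verifying that $\texttt{partialSum}$ never forces more refinement than the corresponding $\texttt{select}$, so that its cost can be charged against the same amortisation scheme; this follows because the descent for $\texttt{partialSum}(r)$ visits precisely the same nodes as $\texttt{select}(r)$, and each added sum is a single $O(1)$ lookup of a precomputed $\sigma(\cdot)$. A secondary point is to confirm that the partitioning, previously analysed in the comparison model, lifts unchanged to the algebraic decision tree model: the only new operations introduced are the additions maintaining $\sigma$, which are counted in the algebraic cost but contribute no asymptotic overhead over the partition scans themselves.
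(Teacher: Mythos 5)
Your proposal is correct and follows essentially the same route as the paper: both augment Karp et al.'s median-based deferred data structure with sum information maintained at no asymptotic extra cost during each partition step, and both answer \texttt{partialSum}$(r)$ by triggering exactly the refinement of a \texttt{select}$(r)$ query plus $O(1)$ lookups of precomputed sums. The only (immaterial) difference is bookkeeping: you store subtree sums at the nodes of the partition tree and accumulate them along the descent, whereas the paper maintains a prefix-sum array over the array positions, recomputed on each newly partitioned range, and reads off the single entry at position $r$ after the \texttt{select}.
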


\begin{lproof}
Karp\etal~\cite{1988-JC-DeferredDataStructuring-KarpMotwaniRaghavan} described a deferred data structure which supports the \texttt{rank} and \texttt{select} queries (but not \texttt{partialSum} queries).  It is based on median computations and $(2,3)$-trees, and performs $q$ queries on $n$ values in time within $O(n(1+\lg q)+q(1+\log n))$, all within the algebraic computational model. We describe below how to modify in a simple way their data structure so that to support \texttt{partialSum} queries with asymptotically negligible additional cost.
At the initialization of the data structure, compute the $n$ partial sums corresponding to the $n$ positions of the unsorted array. After each median computation and partitioning in a \texttt{rank} or \texttt{select} query, recompute the partial sums on the range of values newly partitioned, adding only a constant factor to the cost of the query. When answering a \texttt{partialSum} query, perform a \texttt{select} query and then return the value of the partial sum corresponding to the value by the \texttt{select} query: the asymptotic complexity is within a constant factor of the one described by Karp\etal~\cite{1988-JC-DeferredDataStructuring-KarpMotwaniRaghavan}.\qed
\end{lproof}

\begin{LONG}
Barbay\etal~\cite{2013-ESA-OnlineRankSelect-BarbayGuptaJoRaoSorenson} further improved Karp\etal's result~\cite{1988-JC-DeferredDataStructuring-KarpMotwaniRaghavan} with a simpler data structure (a single binary array) and a finer analysis taking into account the gaps between the position hit by the queries. Barbay\etal's results~\cite{2013-ESA-OnlineRankSelect-BarbayGuptaJoRaoSorenson} can similarly be augmented in order to support \texttt{partialSum} queries while increasing the computational complexity by only a constant factor. This result is not relevant to the analysis described in Section~\ref{sec:analysis}.
\end{LONG}

\begin{LONG}
Such a deferred data structure is sufficient to simply execute van Leeuwen's algorithm~\cite{1976-ICALP-OnTheConstructionOfHuffmanTrees-Leeuwen} on an unsorted array of positive integers, but would not result in an improvement in the computational complexity: van Leeuwen's algorithm~\cite{1976-ICALP-OnTheConstructionOfHuffmanTrees-Leeuwen} is simply performing $n$ \texttt{select} operations on the input, effectively sorting the unsorted array.
\end{LONG}

We describe in the next section an algorithm which uses the deferred data structure described above to batch the operations on the external nodes, and to defer the computation of the weights of some internal nodes to later, so that for many instances the input is not completely sorted at the end of the execution, which reduces the execution cost.


\subsection{Algorithm ``\texttt{Grouping-Docking-Mixing}'' (\texttt{GDM})}\label{sec:algorithm}

\begin{TODO}
We describe the pseudo-code for the \texttt{Grouping-Docking-Mixing} algorithm (\texttt{\texttt{GDM}} for short) in Algorithm \ref{alg:aaa}.
\begin{algorithm}
\caption{\texttt{\texttt{GDM} Algorithm}}
\label{alg:aaa}
\textbf{Input}: a \texttt{PartialSum} deferred data structure $W[1..n]$, initialized with $n$ unsorted positive weights.
\\
\textbf{Output}: a tree representing an optimal prefix free code for the frequencies in $W[1..n]$.
\hrule
\begin{minipage}{.49\textwidth}
\begin{algorithmic}
\IF{ n==1 }
\STATE return $W$
\ENDIF
\STATE Initialize the \texttt{Partial Sum} deferred data structure with $W$; \\
\STATE \texttt{nbExternalProcessed = 2};
\STATE \texttt{currentMinExternal = select(3)};
\STATE \texttt{currentMinInternal = partialSum(2)};
\STATE \texttt{nbInternals = 1};
\STATE \texttt{Internals} = [(partialSum(2),1,2)];
\WHILE{ \texttt{nbExternalProcessed < n}}
  \STATE \texttt{r = rank(currentMinInternal)}; 
  \STATE (...)
\ENDWHILE  
\STATE {\bf return};
\end{algorithmic}
\end{minipage}
\end{algorithm}

\end{TODO}
There are five main phases in the \texttt{\texttt{GDM}} algorithm: the \emph{Initialization}, three phases (\emph{Grouping}, \emph{Docking} and \emph{Mixing}, hence the name ``\texttt{GDM}'' of the algorithm) inside a loop running until only internal nodes are left to process, and the \emph{Conclusion}:
\begin{itemize}
\item In the \emph{Initialization} phase, initialize the \texttt{Partial Sum} deferred data structure with the input, and initialize the first internal node by pairing the two smallest weights of the input.
\item In the \emph{Grouping} phase, detect and group the weights smaller than the smallest internal node: this corresponds to a run of consecutive $E$ in the van Leeuwen signature of the instance.
\item In the \emph{Docking} phase, pair the consecutive \emph{positions} of those weights (as opposed to the weights themselves, which can be reordered by future operations) into internal nodes, and pair  those internal nodes until the weight of at least one such internal node becomes equal or larger than the smallest remaining weight: this corresponds to a run of consecutive $I$ in the van Leeuwen signature of the instance.
\item In the \emph{Mixing} phase, rank the smallest unpaired weight among the weights of the available internal nodes: this corresponds to an occurrence of $IE$ in the van Leeuwen signature of the instance.
\item In the \emph{Conclusion} phase, with $i$ internal nodes left to process,  assign codelength $l=\lfloor \log_2 i\rfloor$ to the $i-2^l$ largest ones and  codelength $l{+}1$ to the 
$2^l$ smallest ones: this corresponds to the last run of consecutive $I$ in the van Leeuwen signature of the instance.
\end{itemize}

The algorithm and its complexity analysis distinguish two types of internal nodes: \emph{pure} nodes, which descendants were all paired during the same \emph{Grouping} phase; and \emph{mixed} nodes, which either is the ancestor of such a \emph{mixed} node, or pairs a \emph{pure} internal node with an external node, or pairs two \emph{pure} internal nodes produced at distinct phases of the algorithm.  The distinction is important as the algorithm computes the weight of any \emph{mixed} node at its creation (potentially generating several data structure operations), whereas it defers the computation of the weight of some \emph{pure} nodes to later.

Before describing each phase more in detail, it is important to observe the following invariant of the algorithm:
\begin{lemma} \label{result:allInternalsWithinAFactorOfTwo}
Given an instance of the optimal prefix free code problem formed by $\nbWeights>1$ positive weights $\weight{1..\nbWeights}$,
between each phase of the algorithm, 
all unpaired internal nodes have weight within a constant factor of two (i.e. the maximal weight of an unpaired internal node is strictly smaller than the minimal weight of an unpaired internal node).
\end{lemma}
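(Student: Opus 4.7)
The plan is to reduce the invariant to a property of van Leeuwen's algorithm, which \texttt{GDM} faithfully simulates: the three phases only batch and defer weight computations, without changing \emph{which} pairs of nodes are eventually joined. Once this reduction is in place, the invariant follows from a short argument exploiting the fact that van Leeuwen creates internal nodes in non-decreasing order of weight.

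First, I would argue that at every boundary between two \texttt{GDM} phases the multiset of unpaired internal nodes (including the \emph{pure} nodes produced by a just-finished Docking, whose weights may still be implicit) coincides with the multiset of unpaired internal nodes at a corresponding moment of van Leeuwen's algorithm executed on the same weights. The Grouping/Docking pair simply identifies a maximal block of external positions whose values are all smaller than the current minimum unpaired internal node and pairs them in the same order van Leeuwen would; Mixing simply carries out the next \texttt{IE} transition. Hence the set of unpaired internal nodes after a \texttt{GDM} phase is exactly the set van Leeuwen would exhibit after processing the same prefix of its signature.

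Second, I would establish the invariant for van Leeuwen. Let $I$ and $I'$ be two internal nodes both unpaired at some moment, with $w(I)\leq w(I')$. Because van Leeuwen creates internal nodes in non-decreasing weight order, $I$ was created strictly before $I'$. At the instant $I'$ is created its two children are the two minimum-weight available nodes, of weights $a\leq b$ with $a+b=w(I')$. At that instant $I$ was available (it is still unpaired later) but was not chosen, so $w(I)\geq b$. Consequently
\[
w(I') \;=\; a+b \;\leq\; 2b \;\leq\; 2\,w(I).
\]
Applied to the minimum and maximum unpaired internal nodes this yields the stated bound.

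The main obstacle is the reduction in the first step: the pure nodes created during Docking are paired by position without the algorithm performing explicit weight comparisons among themselves, so one must check that this batched pairing cannot violate van Leeuwen's monotonicity of internal-node creation. I would dispatch this by observing that, by construction of Grouping, every weight involved in the current batch is smaller than the smallest previously-unpaired internal node; hence every pure internal node born in this batch has weight at most twice the smallest external in the block, which is itself no larger than any previously-unpaired internal node, preserving both the non-decreasing creation order and the factor-two spread between existing unpaired internal nodes. With this verification the two-step argument above closes the proof.
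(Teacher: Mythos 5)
The paper states this lemma as an unproved observation — there is no accompanying \texttt{lproof} or \texttt{proof} environment — so there is no ``paper's own proof'' to compare against. Your two-step strategy (reduce \texttt{GDM}'s phase-boundary states to van Leeuwen states, then prove the factor-of-two invariant for van Leeuwen from the classical fact that Huffman-style merging creates internal nodes in nondecreasing weight order) is sound, and your second step is clean and correct (up to the minor point that if $w(I)=w(I')$ the conclusion is immediate without appealing to creation order, and that the stated lemma's ``strictly smaller'' is a typo for ``strictly smaller than twice'').

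The issue is in your dispatch of the reduction, which you correctly identify as the crux. The sentence ``every pure internal node born in this batch has weight at most \emph{twice the smallest} external in the block'' is backwards: a pure node is a sum $e_i + e_j$ of two externals from the block, so it is at most twice the \emph{largest} external in the block and at least twice the \emph{smallest}. With that fix your chain gives only the upper bound (new pure nodes $< 2\cdot$ (smallest previously-unpaired internal)); you never argue the complementary lower bound, that each new pure node is at least the weight of the largest previously-unpaired internal node, which is what ``preserving the non-decreasing creation order'' actually requires. That lower bound does not follow from ``all externals in the block are smaller than the smallest previous internal'' alone. The cleanest way to close the gap is to make the reduction a genuine simulation claim rather than re-derive the invariant: \texttt{GDM} pairs the block's externals \emph{by rank} (deferred positions are eventual ranks, and pure-node weights are defined via \texttt{partialSum} over a rank range), so the pure nodes it builds are exactly $e_1{+}e_2, e_3{+}e_4,\dots$ in sorted order — the same multiset van Leeuwen would build for that run of $E$s. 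Once the multisets coincide at every phase boundary, your part-2 argument applies verbatim. You should also note that the Docking and Mixing phases need the analogous check (that batch-pairing internal nodes by generation and the doubling-search merge reproduce van Leeuwen's order), which your sketch only addresses for Grouping.
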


We now proceed to describe each phase in more details:

\paragraph{Initialization:} 
Initialize the \texttt{Partial Sum} deferred data structure;
compute the weight $\idtt{currentMinInternal}$ of the first internal node through the operation $\idtt{partialSum}(2)$ (the sum of the two smallest weights); 
create this first internal node as a node of weight $\idtt{currentMinInternal}$ and children $1$ and $2$ (the positions of the first and second weights, in any order);
compute the weight \idtt{currentMinExternal} of the first unpaired weight (i.e. the first available external node) by the operation $\idtt{select}(3)$;
setup the variables $\idtt{nbInternals}=1$ and $\idtt{nbExternalProcessed}=2$.

\paragraph{Grouping:}
Compute the position $r$ of the first unpaired weight which is larger than the smallest unpaired internal node, through the operation \texttt{rank} with parameter $\idtt{currentMinInternal}$;
pair the $((r-\idtt{nbExternalProcessed})$ modulo $2)$ indices to form $\lfloor\frac{r-\idtt{nbExternalProcessed}}{2}\rfloor$ \emph{pure} internal nodes;
if the number $r-\idtt{nbExternalProcessed}$ of unpaired weights smaller than the first unpaired internal node is odd, select the $r$-th weight through the operation $\idtt{select}(r)$, compute the weight of the first unpaired internal node, compare it with the next unpaired weight, to form one \emph{mixed} node by combining the minimal of the two with the extraneous weight.

\paragraph{Docking:}
Pair all internal nodes by batches (by Lemma~\ref{result:allInternalsWithinAFactorOfTwo}, their weights are all within a factor of two, so all internal nodes of a generation are processed before any internal node of the next generation);
after each batch, compare the weight of the largest such internal node (compute it through $\idtt{partialSum}$ on its range if it is a \emph{pure} node, otherwise it is already computed) with the first unpaired weight: if smaller, pair another batch, and if larger, the phase is finished.

\paragraph{Mixing:}
Rank the smallest unpaired weight among the weights of the available internal nodes, by a doubling search starting from the begining of the list of internal nodes. For each comparison, if the internal node's weight is not already known, compute it through a \texttt{partialSum} operation on the corresponding range (if it is a \emph{mixed} node, it is already known). If the number $r$ of internal nodes of weight smaller than the unpaired weight is odd, pair all but one, compute the weight of the last one and pair it with the unpaired weight. If $r$ is even, pair all of the $r$ internal nodes of weight smaller than the unpaired weight, compare the weight of the next unpaired internal node with the weight of the next unpaired external node, and pair the minimum of the two with the first unpaired weight.
If there are some unpaired weights left, go back to the \emph{Grouping} phase, otherwise continue to the \emph{Conclusion} phase.

\paragraph{Conclusion:}
There are only internal nodes left, and their weights are all within a factor of two from each other. 
Pair the nodes two by two in batch as in the \emph{Docking} phase, computing the weight of an internal node only when the number of internal nodes of a batch is odd.

The combination of those phases forms the \texttt{GDM} algorithm, which computes an optimal prefix free code given an unsorted sets of positive integers.
\begin{LONG}
\begin{lemma}
The tree returned by the \texttt{GDM} algorithm describes an optimal prefix free code for its input.
\end{lemma}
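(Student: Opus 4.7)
The plan is to show that \texttt{GDM} simulates a valid execution of van Leeuwen's algorithm~\cite{1976-ICALP-OnTheConstructionOfHuffmanTrees-Leeuwen} on the sorted input, so that its output has the same cost as van Leeuwen's, which is known to be an optimal prefix free code. I would proceed by induction on the phases, maintaining the invariant that at the end of each phase, the multiset of internal nodes created so far matches the multiset that van Leeuwen's algorithm would create at the corresponding prefix of its execution.

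The base case is the \emph{Initialization} phase, which correctly forms the first internal node by pairing the two smallest weights, matching van Leeuwen's first step. For the inductive step, I would verify each remaining phase in turn. The \emph{Grouping} phase, via a single \texttt{rank} query on $\idtt{currentMinInternal}$, correctly identifies the unpaired weights strictly smaller than the current minimum unpaired internal: these correspond to the maximal run of \texttt{E}'s next produced in van Leeuwen's signature. The \emph{Docking} phase pairs these weights into pure internal nodes and, invoking Lemma~\ref{result:allInternalsWithinAFactorOfTwo}, pairs the resulting internals in batches corresponding to the subsequent run of \texttt{I}'s. The \emph{Mixing} phase handles each transition between these runs (an occurrence of \texttt{IE} or \texttt{EI} in the signature) by using a doubling search on the list of unpaired internal nodes to locate the next unpaired weight. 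Finally, the \emph{Conclusion} phase handles the terminating run of internal picks by pairing the remaining internals in balanced batches.

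The main obstacle---and the most delicate step---is verifying that the pairings by ``consecutive positions'' in the deferred data structure (rather than strictly by sorted order) in the \emph{Grouping} and \emph{Docking} phases still produce a tree of optimal cost. The argument has two parts. First, the preceding \texttt{rank} and \texttt{select} queries have partitioned the deferred data structure finely enough that each newly grouped weight lies in its correct sorted band, so that pairings within a band involve weights all strictly smaller than any unpaired internal and any remaining external weight. Second, because all weights grouped together satisfy this strict inequality, the resulting pure subtree has uniform leaf depth for even group sizes; for odd group sizes, the extra weight is placed at the same depth as the paired ones via the mixed-node mechanism in the \emph{Docking} phase. In either case the multiset of leaf depths within the group's subtree depends only on the group's size, not on the specific pairing, so the overall cost is invariant under different choices of pairing within a group.

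Combining this invariance with the inductive hypothesis, the tree produced by \texttt{GDM} has the same total cost as the tree produced by van Leeuwen's algorithm on the sorted input, and is therefore an optimal prefix free code for the input weights. \qed
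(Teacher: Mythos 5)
The paper states this lemma without providing a proof (the \texttt{lemma} environment appears in a \texttt{LONG} block with no accompanying \texttt{proof} or \texttt{lproof}), so there is nothing to compare your argument against directly; I can only assess it on its own terms.

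Your overall strategy---reduce to van Leeuwen's algorithm by arguing cost-equivalence---is the natural one, and correctly identifies the delicate point: \texttt{GDM} pairs the grouped weights by \emph{position}, not by sorted value, so its internal nodes may have different weights than those van Leeuwen would produce. However, the justification you give for why this is harmless does not hold up. The assertion that ``the resulting pure subtree has uniform leaf depth for even group sizes'' is false in general: the group size need not be a power of two, and the grouped weights are not isolated in a subtree of their own---the \emph{Docking} batches fold them together with previously created internal nodes, and the \emph{Mixing} phase may split the resulting internals across different depths when the next external weight is interleaved among them. More importantly, the key claim---``the multiset of leaf depths within the group's subtree depends only on the group's size, not on the specific pairing''---is asserted rather than argued. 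Different pairings within a group produce different internal-node weights (e.g.\ pairing $\{1.1,1.2,1.3,1.4\}$ as $(1.1{+}1.2),(1.3{+}1.4)$ gives $\{2.3,2.7\}$ while $(1.1{+}1.4),(1.2{+}1.3)$ gives $\{2.5,2.5\}$), and because the \emph{Docking} phase terminates by comparing the \emph{largest} current internal against the next unpaired external, a different maximum could in principle change where docking stops and hence change the depth assignment. To close this gap you would need to argue, invoking Lemma~\ref{result:allInternalsWithinAFactorOfTwo}, that the generation boundaries in the \emph{Docking} and \emph{Mixing} phases are invariant to the choice of pairing within a batch---that is, that the set of ``generation indices'' at which the comparison against the next external flips does not depend on which same-generation nodes were paired together. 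That invariance is what actually makes the leaf-depth multiset, and hence the cost, independent of the pairing order; without it the proof has a genuine hole.

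There is also a smaller imprecision: your stated inductive invariant, that ``the multiset of internal nodes created so far matches the multiset that van Leeuwen's algorithm would create,'' is literally false for the reason above---the weights of the internals can differ. The invariant needs to be weakened to something like cost-equivalence or depth-multiset equivalence of the partial trees, which is exactly what the missing argument about pairing-invariance would establish.
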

\end{LONG}
In the next section, we analyze the number $q$ of \texttt{rank}, \texttt{select} and \texttt{partialSum} queries performed by the \texttt{GDM} algorithm, and deduce from it the complexity of the algorithm in term of algebraic operations.

\section{Analysis}\label{sec:analysis}

The \texttt{GDM} algorithm runs in time within $O(n\lg n)$ in the worst case over instances of size $n$ (which is optimal (if not a new result) in the algebraic decision tree model), but much faster on instances with few blocks of consecutive $E$s in the van Leeuwen signature of the instance. We formalize this concept by defining the \emph{alternation} $\alpha$ of the instance in Section~\ref{sec:alternation}. We then proceed in Section~\ref{sec:upper-bound} to show upper bounds on the number of queries and operations performed by the \texttt{GDM} algorithm in the worst case over instances of fixed size $n$ and alternation $\alpha$. We finish in Section~\ref{sec:lower-bound} with a matching lower bound for the number of operations performed.

\subsection{Alternation $\alpha(W)$} \label{sec:alternation}

We suggested in Section~\ref{sec:general-intuition} that the number of blocks of consecutive $E$s in the van Leeuwen signature of an instance can be used to measure its difficulty. Indeed, some ``easy'' instances have few such blocks, and the instance used to prove the $\Omega(n\lg n)$ lower bound on computational complexity of optimal prefix free codes in the algebraic decision tree model in the worst case over instances of size $n$ has $n{-}1$ such blocks (the maximum possible in an instance of size $n$).  We formally define this measure as the ``alternation'' of the instance (it measures how many times the van Leeuwen algorithm ``alternates'' from an external node to an internal node) and denote it by the parameter $\alpha$:

\begin{definition}
Given an instance of the optimal prefix free code problem formed by $\nbWeights$ positive weights $\weight{1..\nbWeights}$, its \emph{alternation} $\alpha(W)\in[1..n-1]$ is the number $|{\cal S}(W)|_{EI}$ of occurrences of the substring ``$EI$'' in its van Leeuwen signature ${\cal S}(W)$.
\end{definition}

\begin{LONG}
Note that counting the number of blocks of consecutive $E$s is equivalent to counting the number of blocks of consecutive $I$s: they are the same, because the van Leeuwen signature starts with two $E$s and finished with an $I$, and each new $I$-block ends an $E$-block and vice-versa. Also, the choice between measuring the number of occurrences of ``$EI$'' or the number of occurrence of ``$IE$'' is arbitrary, as they are within a term of $1$ of each other: counting the number of occurrences of ``$EI$'' just gives a nicer range of $[1..n-1]$ (as opposed to $[0..n-2]$).
\end{LONG}
This number is of particular interest as it measures the number of iteration of the main loop in the \texttt{GDM} algorithm:

\begin{lemma}\label{result:nbLoopIterations}
Given an instance of the optimal prefix free code problem of alternation $\alpha$, 
the \texttt{GDM} algorithm performs $\alpha$ iterations of its main loop.
\end{lemma}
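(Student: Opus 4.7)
The plan is to establish a one-to-one correspondence between iterations of the main loop of \texttt{GDM} and maximal blocks of consecutive $E$'s in the van Leeuwen signature ${\cal S}(W)$, of which there are exactly $\alpha$ by definition of alternation. First I would formalize the block decomposition of the signature: since ${\cal S}(W)$ starts with $EE$, ends with $I$, and contains exactly $\alpha$ occurrences of the substring $EI$ (hence $\alpha{-}1$ occurrences of $IE$, by the earlier lemma), it decomposes uniquely as $E^{a_1}I^{b_1}E^{a_2}I^{b_2}\cdots E^{a_\alpha}I^{b_\alpha}$ with $a_1\geq 2$ and $a_i,b_i\geq 1$.

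Next, I would argue by induction on $i\in[1..\alpha]$ that upon entering iteration $i$ of the main loop the algorithm has consumed exactly the prefix $E^{a_1}I^{b_1}\cdots E^{a_{i-1}}I^{b_{i-1}}$ (where the $i=1$ base case is handled by the \emph{Initialization} phase, which consumes the first two $E$'s from $E^{a_1}$). The inductive step relies on the semantics of each phase: the \emph{Grouping} phase issues a single \texttt{rank}$(\idtt{currentMinInternal})$ query which, by Lemma~\ref{result:allInternalsWithinAFactorOfTwo} and the definition of the van Leeuwen signature, identifies exactly the remaining $E$'s in block $E^{a_i}$; the subsequent \emph{Docking} and \emph{Mixing} phases then consume exactly the $I$-block $I^{b_i}$, with the \emph{Mixing} phase handling the $IE$ transition into block $E^{a_{i+1}}$ precisely when $i<\alpha$.

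The main obstacle will be verifying the boundary behaviour rigorously: in particular, that the rank query in \emph{Grouping} does not over- or under-shoot the current $E$-block, and that \emph{Docking} halts exactly at the boundary of the current $I$-block rather than straying into either neighbour. Both follow from the invariant of Lemma~\ref{result:allInternalsWithinAFactorOfTwo} together with the fact that van Leeuwen's algorithm selects nodes in non-decreasing weight order, so that the character at step $j$ of the signature is $E$ iff the $(j{+}1)$-th smallest combined weight is an original input weight. Since the loop termination condition $\idtt{nbExternalProcessed}<n$ becomes false exactly when the $\alpha$-th $E$-block has been fully consumed, and the remaining block $I^{b_\alpha}$ is handled by the \emph{Conclusion} phase outside the loop, the loop executes exactly $\alpha$ iterations, as claimed. \qed
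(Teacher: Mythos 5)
Your proposal is correct, and it is in fact more than the paper provides: the paper states this lemma without any proof, relying implicitly on the correspondence it sets up in Section~\ref{sec:algorithm} between each phase and a portion of the signature (\emph{Grouping} with a run of $E$s, \emph{Docking} with a run of $I$s, \emph{Mixing} with an occurrence of $IE$, and \emph{Conclusion} with the final run of $I$s). Your block decomposition $E^{a_1}I^{b_1}\cdots E^{a_\alpha}I^{b_\alpha}$ and the induction showing that iteration $i$ consumes exactly $E^{a_i}I^{b_i}$ (with \emph{Initialization} absorbing the leading $EE$ and \emph{Conclusion} absorbing $I^{b_\alpha}$) is precisely the formalization the paper leaves implicit, and your identified obstacle --- that the \texttt{rank} query and the \emph{Docking} halting condition respect block boundaries --- is correctly resolved by Lemma~\ref{result:allInternalsWithinAFactorOfTwo} together with the non-decreasing order in which van Leeuwen's algorithm selects nodes.
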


In the next section, we refine this result to the number of data structure operations and algebraic operations performed by the \texttt{GDM} algorithm.

\subsection{Upper Bound}\label{sec:upper-bound}

In order to measure the number of queries performed by the \texttt{GDM} algorithm, we detail how many queries are performed in each phase of the algorithm.
\begin{itemize}

\item The \emph{Initialization} corresponds to a constant number of  data structure operations: a \texttt{select} operation to find the third smallest weight, and a simple \texttt{partialSum} operation to sum the two smallest weights of the input.

\item Each \emph{Grouping} phase corresponds to a constant number of  data structure operations: a \texttt{partialSum} operation to compute the weight of the smallest internal node if needed, and a \texttt{rank} operation to identify the unpaired weights which are smaller or equal to this node.

\item The number of operations performed by each \emph{Docking} and \emph{Mixing} phase is better analyzed together: if there are $i$ symbols in the $I$-block corresponding to this phase in the van Leeuwen signature,  and if the internal nodes are grouped on $h$ levels before generating an internal node larger than the smallest unpaired weights, the \emph{Docking} phase corresponds to at most $h$ \texttt{partialSum} operations, whereas the \emph{Mixing} phase corresponds to at most $\log_2(i/2^h)$ \texttt{partialSum} operations, which develops to $\log_2(i)-h$, for a total of $\log_2 i$ data structure operations.

\item The \emph{Conclusion} phase corresponds to a number of data structure operations logarithmic in the size of the last block of $I$s in the Leeuwen's signature of the instance: in the worst case, the weight of one \emph{pure} internal node is computed for each batch, through one single \texttt{partialSum} operation each time.

\end{itemize}

Lemma~\ref{result:nbLoopIterations} and the concavity of the log yields the total number of data structure operations performed by the \texttt{GDM} algorithm:

\begin{lemma}\label{result:upper-boundQueries}
Given an instance of the optimal prefix free code problem of alternation $\alpha$, 
the \texttt{GDM} algorithm performs within $O(\alpha(1+\lg\frac{n-1}{\alpha}))$ data structure operations on the deferred data structure given as input.
\end{lemma}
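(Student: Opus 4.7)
The plan is to add up the per-phase counts that the excerpt has already itemised and then apply Jensen's inequality to the concavity of the logarithm, using Lemma~\ref{result:nbLoopIterations} to control the number of summands.

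First I would fix notation: let $\alpha = |\mathcal{S}(W)|_{EI}$ be the alternation, and let $i_1,\dots,i_\alpha$ denote the lengths of the successive blocks of consecutive $I$'s in the van Leeuwen signature $\mathcal{S}(W)$ (including the final block handled by the \emph{Conclusion} phase). Since the signature contains exactly $n-1$ symbols equal to $I$, we have the budget constraint $\sum_{j=1}^{\alpha} i_j = n-1$. By Lemma~\ref{result:nbLoopIterations}, the main loop runs exactly $\alpha$ times, and each of these iterations executes one \emph{Grouping}, one \emph{Docking} and one \emph{Mixing} phase (with the last iteration's Docking/Mixing replaced by the \emph{Conclusion}).

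Next I would sum the contributions itemised just before the statement. The \emph{Initialization} contributes $O(1)$ queries. The $j$-th \emph{Grouping} phase contributes $O(1)$ (one \texttt{partialSum} plus one \texttt{rank}), so the Groupings together contribute $O(\alpha)$. The $j$-th \emph{Docking}+\emph{Mixing} pair (or the \emph{Conclusion}, for $j=\alpha$) contributes $O(\log_2 i_j)$ queries, as argued by the decomposition $\log_2 i_j = h_j + (\log_2 i_j - h_j)$ between the $h_j$ levels climbed in \emph{Docking} and the doubling search in \emph{Mixing}. Summing gives a total number of queries within
\[
O\!\left(\alpha + \sum_{j=1}^{\alpha} \log_2 i_j\right).
\]

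Finally, I would bound the sum by Jensen's inequality: since $x \mapsto \log_2 x$ is concave and $\sum_j i_j = n-1$, we have
\[
\sum_{j=1}^{\alpha} \log_2 i_j \;\leq\; \alpha \log_2\!\left(\frac{n-1}{\alpha}\right),
\]
with the convention $\log_2 1 = 0$ handling any singleton blocks. Combining this with the $O(\alpha)$ term from the Initialization, Groupings and the $+1$ absorbs the case $\alpha = n-1$ (where the logarithm vanishes), giving the announced bound $O(\alpha(1+\lg\frac{n-1}{\alpha}))$.

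The only delicate point I expect is the bookkeeping that justifies the $O(\log_2 i_j)$ bound per $I$-block: I must argue that a \emph{pure} internal node whose weight has not yet been computed contributes at most one \texttt{partialSum} call (amortised against the $h_j$ doubling levels already paid for in \emph{Docking}), and that \emph{mixed} nodes created during earlier phases carry their weight with them and thus incur no new query. Invariant Lemma~\ref{result:allInternalsWithinAFactorOfTwo}, which guarantees that all unpaired internal nodes have weight within a factor of two between phases, is what makes the $h_j$-level batch processing legitimate and keeps the $j$-th block cost logarithmic in $i_j$ rather than linear.
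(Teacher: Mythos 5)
Your proof matches the paper's argument: sum the per-phase query counts (constant for the \emph{Initialization} and each \emph{Grouping}, logarithmic for each \emph{Docking}+\emph{Mixing} and the \emph{Conclusion}) and apply concavity of the logarithm over the $\alpha$ iterations given by Lemma~\ref{result:nbLoopIterations}. The one bookkeeping difference is that the paper lets $n_i$ denote ``the number of internal nodes at the beginning of the $i$-th Docking phase'' and invokes ``at most $n-1$ internal nodes'' to justify concavity, whereas you use the $I$-block lengths $i_j$ with the exact budget $\sum_{j=1}^{\alpha} i_j = n-1$; your choice is the cleaner one, since it is the sum (not the individual terms) that Jensen's inequality needs to be bounded by $n-1$, and that bound is immediate for the $i_j$ but not obviously so for the paper's $n_i$.
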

\begin{proof}
For $i\in[1..\alpha]$, let $n_i$ be the number of internal nodes at the beginning of the $i$-th \emph{Docking} phase.  According to Lemma~\ref{result:nbLoopIterations} and the analysis of the number of data structure operations performed in each phase, the \texttt{GDM} algorithm performs in total within $O(\alpha + \sum_{i=1}^\alpha \lg n_i)$ data structure operations.
Since there are at most $n-1$ internal nodes, by concavity of the logarithm this is within $O(\alpha + \alpha \lg\frac{n}{\alpha})=O(\alpha(1+\lg\frac{n}{\alpha}))$. \qed
\end{proof}

Combining this result with the complexity of the \texttt{Partial Sum} deferred data structure from Lemma~\ref{result:dds} directly yields the complexity of the \texttt{GDM} algorithm in algebraic operation (and running time):

\begin{lemma}\label{result:upper-boundOperations}
Given an instance of the optimal prefix free code problem of alternation $\alpha$, the \texttt{GDM} algorithm runs in time within $O(n(1{+}\log \alpha) + \alpha(\lg n)(\lg\frac{n}{\alpha}))$, all within the algebraic computational model.
\end{lemma}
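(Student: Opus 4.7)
The plan is to simply compose Lemma~\ref{result:upper-boundQueries}, which bounds the number $q$ of deferred data structure operations issued by \texttt{GDM} by $q\in O(\alpha(1+\lg\frac{n-1}{\alpha}))$, with Lemma~\ref{result:dds}, which bounds the amortized cost of $q$ such operations on $n$ input weights by $O(n(1+\lg q)+q(1+\lg n))$ algebraic instructions. Both bounds are already stated in the algebraic decision tree model, so the combination yields a bound in the same model.

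First I would substitute the expression for $q$ into $O(n(1+\lg q)+q(1+\lg n))$, then simplify each of the two additive terms separately. For the second term, $q(1+\lg n)=O(\alpha(1+\lg\frac{n-1}{\alpha})(1+\lg n))$ collapses to $O(\alpha(\lg n)(\lg\frac{n}{\alpha}))$ because $\alpha\in[1..n-1]$ forces both the additive ``$1$'' factors to be absorbed. For the first term, I would expand $\lg q=\lg\alpha+\lg(1+\lg\frac{n-1}{\alpha})$, so that $n(1+\lg q)\subseteq O(n(1+\lg\alpha)+n\lg(1+\lg\frac{n-1}{\alpha}))$; the residual doubly-logarithmic term must then be shown to be absorbable into the final bound.

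Summing the two simplified contributions gives exactly the claimed $O(n(1+\lg\alpha)+\alpha(\lg n)(\lg\frac{n}{\alpha}))$. The only delicate step is checking the absorption of $n\lg(1+\lg\frac{n}{\alpha})$: I would do a case split on the magnitude of $\alpha$. When $\alpha\geq\lg n$, the doubly-logarithmic term is bounded above by $n\lg\alpha$ and disappears into the first summand. When $\alpha<\lg n$, the term is dominated by the second summand $\alpha(\lg n)(\lg\frac{n}{\alpha})$ once one exploits that $\lg\frac{n}{\alpha}\geq\lg\lg n$ in this range, which controls the doubly-logarithmic factor. This case analysis is the only non-mechanical ingredient; everything else is direct substitution and elementary logarithmic manipulation.
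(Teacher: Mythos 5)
Your overall strategy is the same as the paper's: substitute the query bound $q\in O(\alpha(1+\lg\frac{n}{\alpha}))$ from Lemma~\ref{result:upper-boundQueries} into the $O(n(1+\lg q)+q(1+\lg n))$ bound of Lemma~\ref{result:dds}. You are in fact \emph{more} careful than the paper, which simply states that ``plunging in'' the expression yields the claimed bound and never mentions the residual $n\lg\bigl(1+\lg\frac{n}{\alpha}\bigr)$ term.

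However, your resolution of that residual term is where the argument breaks. Your first case, $\alpha\geq\lg n$, is fine: there $\lg\bigl(1+\lg\frac{n}{\alpha}\bigr)\leq\lg(1+\lg n)\in O(\lg\alpha)$, so the term is absorbed by $n(1+\lg\alpha)$. But your second case, $\alpha<\lg n$, does not work. In that range the residual term is $n\lg\bigl(1+\lg\frac{n}{\alpha}\bigr)=\Theta(n\lg\lg n)$ (since $\frac{n}{\alpha}$ is polynomial in $n$), whereas the summand you claim dominates it satisfies $\alpha(\lg n)(\lg\frac{n}{\alpha})\leq(\lg n)^3$, which is \emph{polylogarithmic}. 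For large $n$ we have $n\lg\lg n\gg(\lg n)^3$, so the claimed domination is false; the fact that $\lg\frac{n}{\alpha}\geq\lg\lg n$ gives a lower bound on the wrong side and does not help. There is no way to close this case with the two summands available: for instance, at $\alpha=2$ the lemma's right-hand side is $O(n)$ while the $n\lg q$ contribution is genuinely $\Theta(n\lg\lg n)$. This points to the stated bound being optimistic by a $\lg\lg$ factor when $\alpha$ is very small --- a gap that the paper's own one-line proof silently steps over, but that your more explicit expansion exposes rather than repairs. A sound writeup should either keep the extra $n\lg\bigl(1+\lg\frac{n}{\alpha}\bigr)$ term in the statement, or give a separate argument that the \texttt{GDM} algorithm issues $o(\alpha\lg\frac{n}{\alpha})$ queries in the small-$\alpha$ regime.
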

\begin{proof}
Let $q$ be the number of queries performed by the \texttt{GDM} algorithm.
Lemma~\ref{result:upper-boundQueries} implies that $q\in O(\alpha(1+\lg\frac{n}{\alpha}))$.
Plunging this into the complexity of $O(q\lg n + n\lg q)$ from  Lemma~\ref{result:dds} yields the complexity
 $O(n(1{+}\log \alpha) + \alpha(\lg n)(\lg\frac{n}{\alpha}))$. \qed
\end{proof}

Some simple functional analysis further simplifies the expression to our final upper bound:
\begin{lemma}\label{result:asymptotics}
Given two positive integers $n>0$ and $\alpha\in[1..n-1]$,
$$O(\alpha(\lg n)(\lg\frac{n}{\alpha})) \subseteq O(n(1+\lg \alpha))$$
\end{lemma}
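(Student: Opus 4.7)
The plan is to reduce the inequality to a single-variable analysis via the substitution $a=\lg(n/\alpha)$, so that $\alpha=n\cdot 2^{-a}$ and $\lg\alpha=\lg n-a$, with $a$ ranging (approximately) over $(0,\lg n]$ since $\alpha\in[1..n{-}1]$. After dividing through by $n$, showing
$$\alpha(\lg n)\lg(n/\alpha)\;\le\;c\,n(1+\lg\alpha)$$
amounts to showing that the ratio
$$R(a,n)\;=\;\frac{(\lg n)\cdot a\cdot 2^{-a}}{1+\lg n-a}$$
is bounded by an absolute constant on the relevant domain. So the plan is to substitute, then do a two-case analysis on the size of $a$ relative to $\tfrac{1}{2}\lg n$.

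For the two cases, I would rely on one elementary calculus fact: the function $a\mapsto a\cdot 2^{-a}$ attains its maximum on $[0,\infty)$ at $a=1/\ln 2$ and is bounded by a constant $C_0=1/(e\ln 2)$. In Case~A, when $a\le\tfrac12\lg n$, the denominator $1+\lg n-a\ge 1+\tfrac12\lg n$ is $\Omega(\lg n)$, while the numerator is at most $C_0\lg n$, so the ratio is bounded by $2C_0$. In Case~B, when $a>\tfrac12\lg n$, the factor $a\cdot 2^{-a}$ is at most $(\lg n)/2^{(\lg n)/2}=(\lg n)/\sqrt{n}$, so the numerator is at most $(\lg n)^2/\sqrt n$, which tends to $0$ and is therefore bounded by a constant, while the denominator is at least $1$. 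Combining the two cases yields $R(a,n)=O(1)$, hence the desired inclusion. Boundary values of $n$ for which the asymptotic estimates in Case~B are loose can be absorbed into the constant by checking the finitely many exceptional cases directly.

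The main obstacle I anticipate is mainly bookkeeping: making sure the endpoints $\alpha=1$ and $\alpha=n-1$ are handled (corresponding respectively to $a=\lg n$ and $a\to 0^+$), and checking that the $+1$ term in $1+\lg\alpha$ is enough to avoid a vanishing denominator when $\alpha=1$. No step is deep; the whole argument is a short two-case estimate built on the elementary bound $a\cdot 2^{-a}=O(1)$ and the concavity of the logarithm already invoked in the previous lemma.
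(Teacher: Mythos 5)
Your proposal is correct, and it takes a genuinely different route from the paper's. After the substitution $a=\lg(n/\alpha)$ you reduce the claim to bounding $R(a,n)=\frac{a\,2^{-a}\lg n}{1+\lg n-a}$ over $a\in(0,\lg n]$, and your two cases do cover the whole range: for $a\le\frac{1}{2}\lg n$ the denominator is at least $1+\frac{1}{2}\lg n$ while $a\,2^{-a}\le 1/(e\ln 2)$, and for $a>\frac{1}{2}\lg n$ you get $a\,2^{-a}\le(\lg n)/\sqrt{n}$ with denominator at least $1$ (this second bound uses only $a\le\lg n$ and $2^{-a}\le 2^{-(\lg n)/2}$, so no monotonicity of $a\,2^{-a}$ is needed, and the finitely many small $n$ where $(\lg n)^2/\sqrt{n}$ is not yet small are absorbed into the constant, as you note). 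The paper instead argues by a single chain of pointwise inequalities, asserting $\alpha<n/\lg n$ and $\alpha/\lg\alpha<n$, ``rewriting'' these into a comparison between $\alpha\lg^2 n$ and $n\lg\alpha$, and then invoking $\lg(n/\alpha)<\lg n$. That chain does not hold uniformly on $[1..n-1]$: the first asserted inequality $\alpha<n/\lg n$ fails for $\alpha$ near $n$ (e.g.\ $\alpha=n-1$), the derived inequality $\alpha/\lg\alpha<n/\lg^2 n$ fails for the same $\alpha$, and the direction of the displayed comparison is inconsistent with the conclusion drawn from it. This is precisely the point of your case split: no single monotone estimate works for both small and large $\alpha$, but the two regimes $a\lessgtr\frac{1}{2}\lg n$ each admit a clean constant bound. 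So your argument is not only valid but repairs a real gap in the paper's own justification of the lemma (the lemma itself is true).
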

\begin{proof}
Given two positive integers $n>0$ and $\alpha\in[1..n-1]$,
$\alpha<\frac{n}{\lg n}$ and $\frac{\alpha}{\lg \alpha}<n$.
A simple rewriting yields 
$\frac{\alpha}{\lg \alpha}<\frac{n}{\lg^2 n}$ 
and
$\alpha\lg^2 n > n\lg\alpha$ .
Then,  $n/\alpha < n$ implies 
$\alpha \times \lg n \times \lg\frac{n}{\alpha} < n\lg\alpha$, which yields the result.
\qed
\end{proof}

In the next section, we show that this complexity is indeed optimal in the algebraic decision tree model, in the worst case over instances of fixed size $n$ and alternation $\alpha$.

\subsection{Lower Bound}\label{sec:lower-bound}

A complexity within $O(n(1+\lg\alpha))$ is exactly what one could expect, by analogy with the sorting of \textsc{Multisets}: there are $\alpha$ groups of weights, so that the order within each groups does not matter much, but the order between weights from different groups matter a lot. We prove a lower bound within $\Omega(n\lg \alpha)$ by reduction to \textsc{Multiset} sorting:

\begin{lemma}\label{result:lowerBoundWorstCase}
Given the integers $n\leq 2$ and $\alpha\in[1..n{-}1]$, 
for any algorithm $A$ in the algebraic decision tree computational model,
there is a set $W[1..n]$ of $n$ positive weights of alternation $\alpha$
such that $A$ performs within $\Omega(n\lg \alpha$) operations.
\end{lemma}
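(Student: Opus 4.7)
The plan is to establish this lower bound by reduction from \textsc{Multiset Sorting}, for which a matching $\Omega(n\lg\alpha)$ bound is classical. Specifically, sorting a multiset of $n$ elements drawn from $\alpha$ distinct values with balanced multiplicities $n/\alpha$ each requires $\Omega(n\lg\alpha)$ operations in the algebraic decision tree model, via the entropy $\lg\binom{n}{n/\alpha,\ldots,n/\alpha}=\Theta(n\lg\alpha)$ combined with Ben-Or's depth theorem for algebraic decision trees.

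Given such a multiset $X=(x_1,\ldots,x_n)$ with distinct positive integer values $v_1<\cdots<v_\alpha$ each of multiplicity $n/\alpha$, I would construct the weights $\weight{i}=2^{C\cdot x_i}$ with $C=\lceil\lg n\rceil+2$. This is a refinement of the powers-of-two construction mentioned in the introduction for the classical $\Omega(n\lg n)$ bound: the choice of $C$ ensures that $2^{Cv_{j+1}}$ strictly exceeds $n$ times any weight from a lower stratum, so that in any optimal prefix free code all weights of the same value receive identical code lengths and weights of smaller value receive strictly larger ones. Reading off the code lengths $\codeLength{1..n}$ therefore reveals, for each position $i$, the value $v_j=x_i$, which is precisely a multiset sort of $X$.

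Next, I would verify that this $W$ has alternation exactly $\alpha$. The domination property of $C$ ensures that van Leeuwen's algorithm exhausts the $n/\alpha$ external nodes of the smallest stratum before any internal node it produces becomes as large as the next stratum's external nodes; those internal nodes are then merged pairwise, producing a block of $I$'s; a boundary transition then pairs the resulting subtree root with the next stratum's first external; and the process repeats. Each stratum thus contributes exactly one occurrence of ``$EI$'' to ${\cal S}(W)$ (provided $n/\alpha\geq 2$, i.e.\ $\alpha\leq n/2$), for a total alternation of $\alpha$. For the remaining range $\alpha\in(n/2,n-1]$, a minor variation mixing strata of multiplicity $1$ and $2$ still achieves alternation $\alpha$ while preserving $\Omega(n\lg\alpha)$ bits of information in the input.

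To conclude, by contradiction: any algebraic decision tree algorithm solving optimal prefix free codes on all instances of size $n$ and alternation $\alpha$ in $o(n\lg\alpha)$ operations would, through this reduction, sort the multiset $X$ in $o(n\lg\alpha)$ operations, contradicting the multiset sorting lower bound. The main technical obstacle lies in the alternation count of the third paragraph: one must bound the total weight accumulated during the processing of a single stratum and verify that it never crosses the threshold of the next, so that van Leeuwen's algorithm indeed produces exactly one ``$EI$'' transition per stratum rather than interleaving them and inflating the alternation count.
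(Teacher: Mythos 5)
Your proof takes essentially the same approach as the paper's: a reduction from multiset sorting by mapping a multiset over $\alpha$ distinct values to weights that are powers of two, so that the resulting instance has alternation $\alpha$ and the optimal code lengths reveal the sorted order of the multiset, whence the classical $\Omega(n\lg\alpha)$ multiset-sorting bound applies. Your version is in fact more careful than the paper's terse argument: the amplification factor $C=\lceil\lg n\rceil+2$ prevents a stratum's accumulated internal weight from interfering with the next stratum (and avoids tie-breaking ambiguities that the paper's bare $2^{x_i}$ construction leaves open), and you explicitly handle the range $\alpha>n/2$, which the paper glosses over.
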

\begin{proof}
For any \textsc{Multiset} $A[1..n]=\{x_1,\ldots,x_n\}$ of $n$ values from an alphabet of $\alpha$ distinct values, define the instance $W_A=\{2^{x_1},\ldots,2^{x_n}\}$ of size $n$, so that computing an optimal prefix free code for $W$, sorted by codelength, provides an ordering for $A$. $W$ has alternation $\alpha$: for any two distinct values $x$ and $y$ from $A$, the van Leeuwen algorithm pairs all the weights of value $2^x$ before pairing any weight of value $2^y$, so that the van Leeuwen signature of $W_A$ has $\alpha$ blocks of consecutive $E$s. The lower bounds then results from the classical lower bound on sorting \textsc{Multisets} in the comparison model in the worst case over \textsc{Multisets} of size $n$ with $\alpha$ distinct symbols.\qed
\end{proof}

\begin{INUTILE}
\begin{proof}

We prove a lower bound within $\Omega(n\lg \alpha)$ by defining for any given value of $n$ and $\alpha$ an instance with $\alpha$ groups of $n/\alpha$ equal weights, so that computing an optimal prefix free code for this instance corresponds to sorting $n$ values from an alphabet of $\alpha$ distinct symbols.

Let $k>\frac{n}{\alpha}$ be a constant strictly larger than $n/\alpha$.  We define the instance $W_q$ composed by $\alpha$ groups of equal weights, where for $i\in[1..\alpha]$ the $i$-th group is composed of $\frac{n}{\alpha}$ weights of value $2^{k^{i-1}}$:
$$W_k = \{1,\ldots,1,2^k,\ldots,2^k,2^{k^2},\ldots,2^{k^2},\ldots, 2^{k^\alpha},\ldots,2^{k^\alpha}\}$$
(...)

\end{proof}
\end{INUTILE}

\begin{TODO}
We can even show a stronger result:

\begin{lemma}\label{result:lowerBoundAverage}
Given the integers $n\leq 2$ and $\alpha\in[1..n{-}1]$, for any algorithm $A$ in the algebraic decision tree computational model, there is a set $W[1..n]$ of $n$ positive weights of alternation $\alpha$ such that $A$ performs within $\Omega(n\lg \alpha)$ operations on average on random permutations of $W$.
\end{lemma}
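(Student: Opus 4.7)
The plan is to lift the reduction of Lemma~\ref{result:lowerBoundWorstCase} from the worst case to the average case, exploiting a key invariance: the van Leeuwen signature (and hence the alternation) depends only on the multiset of input weights, not on their ordering, because van Leeuwen's algorithm begins by sorting. Consequently every permutation of a fixed multiset $W$ has the same alternation, which makes it meaningful to fix one hard $W$ and average the complexity over its $n!$ orderings.

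First I would construct the hard instance. Let $m = \lfloor n/\alpha\rfloor$ and take $W$ to contain, for each $i \in [1..\alpha]$, exactly $m$ copies of $v_i = 2^{K^i}$ (residual copies distributed arbitrarily among the groups), where $K > 2m$ is a constant chosen large enough that the total weight of groups $1,\ldots,i$ is strictly smaller than $v_{i+1}$. A direct simulation of van Leeuwen's algorithm on the sorted sequence then shows that the signature has exactly $\alpha$ $E$-blocks, so $\alpha(W_\sigma) = \alpha$ for every permutation $W_\sigma$ of $W$.

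Next I would establish the reduction. For such doubly-exponentially separated values, the optimal Huffman tree is forced to cluster all copies of a given $v_i$ in a common subtree hanging at a group-specific depth, so the labeled output tree uniquely identifies, for each position, which of the $\alpha$ value-classes it belongs to. Two permutations of $W$ therefore yield the same output if and only if they differ by within-group reshufflings, producing $N = n!/(m!)^\alpha$ distinct outputs. Stirling's formula gives $\lg N = n\lg\alpha - O(n) \in \Omega(n\lg\alpha)$.

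Finally I would convert this counting bound into an expected-depth bound. Under the uniform distribution on the $n!$ orderings of $W$, any correct decision tree must contain at least $N$ distinguishable leaf-regions, and the Shannon--Fano--Elias inequality (equivalently, Kraft applied to the leaf depths) then yields expected depth at least $\lg N \in \Omega(n\lg\alpha)$. The hard part is that this last inequality is classical only for the comparison model, whereas the lemma concerns algebraic decision trees; to close the gap I would invoke an average-case analogue of Ben-Or's theorem, derived by combining the Milnor--Thom bound $2^{O(d)}$ on the number of depth-$d$ leaf-regions of a bounded-degree algebraic decision tree with the standard entropy argument. This preserves the $\Omega(\lg N) = \Omega(n\lg\alpha)$ lower bound up to constant factors, and completes the argument.
\qed
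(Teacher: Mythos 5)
The paper states this lemma inside a \texttt{TODO} block (which is excluded from compilation) with the prefix ``We can even show a stronger result:'' and gives \emph{no} proof, so there is no author argument to compare against; I therefore assess your proposal on its own terms and against the proof of the adjacent worst-case Lemma~\ref{result:lowerBoundWorstCase}, which proceeds by reduction to multiset sorting.

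Your overall plan --- fix a hard multiset of alternation $\alpha$, show that correctly labelling a random ordering of it forces $\Omega(n\lg\alpha)$ bits of information, and convert that into an expected-depth bound --- is sound and is the natural average-case upgrade of the paper's reduction. A few points deserve attention. First, your construction with $v_i=2^{K^i}$ and $K$ large enough that each group's total weight is dominated by the next value is actually \emph{more} careful than the paper's own $W_A=\{2^{x_i}\}$: with merely exponentially spaced values and ties broken toward internal nodes, consecutive $E$-blocks can coalesce (e.g.\ $\{2,2,2,2,4,4,4,4\}$ has alternation $1$, not $2$), so the separation you impose is genuinely needed to guarantee $\alpha(W)=\alpha$, and you should make that point rather than attribute it to a ``direct simulation.'' Second, your ``if and only if'' is too strong: a deterministic algorithm is free to emit different (but equally optimal) codes for two orderings that differ only by a within-group swap, so ``same output $\Rightarrow$ same group assignment'' is all you can claim --- fortunately that one-sided implication, i.e.\ at least $N=n!/(m!)^\alpha$ distinct outputs, is all the entropy argument needs. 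Third, the estimate ``$\lg N = n\lg\alpha - O(n)$'' does not by itself give $\Omega(n\lg\alpha)$ when $\alpha$ is a small constant (the $O(n)$ term can swallow $n\lg\alpha$); the multinomial bound $\lg N = n\lg\alpha - O(\alpha\lg n)$, together with $\alpha\lg n\le n\lg\alpha$ for $\alpha\ge 2$, closes that gap cleanly. Finally, the step you flag as ``the hard part'' --- porting the entropy lower bound from comparison trees to algebraic decision trees --- is indeed where the real work lies: a leaf of an algebraic tree need not have a constant output on a disconnected semi-algebraic region, so one must combine a Milnor--Thom component count per leaf with the Kraft-type averaging, and that average-case variant of Ben-Or's theorem is not an off-the-shelf citation. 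Note, however, that the paper's own proof of Lemma~\ref{result:lowerBoundWorstCase} silently invokes a comparison-model lower bound while claiming an algebraic-model result, so your proposal is no less rigorous on this point than the published worst-case argument; it just leaves the same foundational step unproved.
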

\end{TODO}

\begin{INUTILE}
\section{Experimentations}\label{sec:experimentations}
\subsection{Alternation in Practice}\label{sec:alternation-practice}
\subsection{Running Time in Practice}\label{sec:runn-time-pract}
\end{INUTILE}

We compare our results to previous results in the next section.
\pagebreak[3]
\section{Discussion}
\label{sec:discussion}

\begin{LONG}
We described an algorithm computing an optimal prefix free code for $n$ unsorted positive weights in time within $O(n(1{+}\lg \alpha))\subseteq O(n\lg n)$, where the alternation $\alpha\in[1..n{-}1]$ roughly measures the amount of sorting required by the computation, by combining van Leeuwen's results about optimal prefix free codes~\cite{1976-ICALP-OnTheConstructionOfHuffmanTrees-Leeuwen}, known since 1976, with results about Karp\etal's results about Deferred Data Structures~\cite{1988-JC-DeferredDataStructuring-KarpMotwaniRaghavan}, known since 1988.
\end{LONG}
The results described above yields many new questions, of which we discuss only a few in the following sections.

\begin{VLONG}
We discuss in this section how those results relate to previous results on optimal prefix free codes (Section~\ref{sec:relat-prev-work}), to other results on Deferred Data Structures obtained since 1988 (Section~\ref{sec:aplic-dynamic-results} and~\ref{sec:aplic-refined-results})\begin{LONG}, to the lack of practical applications of our results on optimal prefix free codes (Section~\ref{sec:impact-our-results}), \end{LONG} and about perspectives of research on this topic (Section~\ref{sec:perspectives}).
We list in the appendix~\ref{sec:relev-optim-pref} some interesting quotes about the importance of optimal prefix free codes in general.
\end{VLONG}

\subsection{Relation to previous work on optimal prefix free codes} \label{sec:relat-prev-work} 

In 2006, Belal\etal~\cite{2006-STACS-DistributionSensitiveConstructionOfMinimumRedundancyPrefixCodes-BelalElmasry}, described a variant of Milidi{\'u}\etal's algorithm~\cite{2001-IEEE-ThreeSpaceEconomicalAlgorithmsForCalculatingMinimumRedundancyPrefixCodes-MilidiuPessoaLaber,1998-TR-ASpaceEconomicalAlgorithmForMinimumRedundancyCoding-MiliduPessoaLaber} to compute optimal prefix free codes, announcing that it performed $\Oh(\nbCodeLengths\nbWeights)$ algebraic operations when the weights are not sorted, where $\nbCodeLengths$ is the number of distinct code lengths in some optimal prefix free code.

\begin{LONG}
They describe an algorithm claimed to run in time $\Oh({16}^\nbCodeLengths\nbWeights)$ when the weights are unsorted, and propose to improve the complexity to $\Oh(\nbCodeLengths\nbWeights)$ by partitioning the weights into smaller groups, each corresponding to disjoint intervals of weights value\footnote{Those results were downgraded in the December 2010 update of their initial 2005 publication through \texttt{Arxiv}~\cite{2005-ARXIV-DistributionSensitiveConstructionOfMinimumRedundancyPrefixCodes-BelalElmasry}.}.  The claimed complexity is asymptotically better than the one suggested by Huffman when $\nbCodeLengths\in o(\log\nbWeights)$, and they raise the question of whether there exists an algorithm running in time $\Oh(\nbWeights\log\nbCodeLengths)$.
\end{LONG}

Like the \texttt{GDM} algorithm, the algorithm described by Belal\etal~\cite{2006-STACS-DistributionSensitiveConstructionOfMinimumRedundancyPrefixCodes-BelalElmasry} for the unsorted case is based on several computations of the median of the weights within a given interval, in particular, in order to select the weights smaller than some well chosen value.  The essential difference between both work is the use of deferred data structure, which simplifies both the algorithm and the analysis of its complexity.

\begin{LONG}
\subsection{Applicability of dynamic results on Deferred Data Structures}
\label{sec:aplic-dynamic-results}

Karp\etal~\cite{1988-JC-DeferredDataStructuring-KarpMotwaniRaghavan}, when they defined the first Deferred Data Structures, supporting \texttt{rank} and \texttt{select} on \textsc{Multisets} \begin{VLONG} and other queries on \textsc{Convex Hull}\end{VLONG}, left as an open problem the support of dynamic operators such as \texttt{insert} and \texttt{delete}: Ching\etal~\cite{1990-IPL-DynamicDeferredDataStructuring-ChingMehlhornSmid} quickly demonstrated how to add such support in good amortized time.

The dynamic addition and deletion of elements in a deferred data structure (added by Ching\etal~\cite{1990-IPL-DynamicDeferredDataStructuring-ChingMehlhornSmid} to Karp\etal~\cite{1988-JC-DeferredDataStructuring-KarpMotwaniRaghavan}'s results) does not seem to have any application to the computation of optimal prefix free codes: even if the list of weights was dynamic, further work is required to build a deferred data structure supporting prefix free code queries.
\end{LONG}

\begin{LONG}
\subsection{Applicability of refined results on Deferred Data Structures}
\label{sec:aplic-refined-results}

Karp~{\etal}'s analysis~\cite{1988-JC-DeferredDataStructuring-KarpMotwaniRaghavan} of the complexity of the deferred data structure is in function of the total number $q$ of queries and operators, while Kaligosi\etal~\cite{2005-ICALP-TowardsOptimalMultopleSelection-KaligosiMehlhornMunroSanders} analyzed the complexity of an offline version in function of the size of the gaps between the positions of the queries.  Barbay\etal\cite{2013-ESA-OnlineRankSelect-BarbayGuptaJoRaoSorenson} combined the three results into a single deferred data structure for \textsc{Multisets} which supports the operators \texttt{rank} and \texttt{select} in amortized time proportional to the entropy of the distribution of the sizes of the gaps between the positions of the queries.

At first view, one could hope to generalized the refined entropy analysis (introduced by Kaligosi\etal~\cite{2005-ICALP-TowardsOptimalMultopleSelection-KaligosiMehlhornMunroSanders} and applied by Barbay\etal\cite{2013-ESA-OnlineRankSelect-BarbayGuptaJoRaoSorenson} to the online version) of \textsc{Multisets} deferred data structures supporting \texttt{rank} and \texttt{select} to the computational complexity of optimal prefix free codes: a complexity proportional to the entropy of the distribution of codelengths in the output would nicely match the lower bound of $\Omega(\nbCodeLengths(1+\entropy(n_1,\ldots,n_h)))$ suggested by information theory, where the output contains $n_i$ codes of length $l_i$, for some integer vector $(l_1,\ldots,l_h)$ of distinct codelengths and some integer $h$ measuring the number of distinct codelengths. Our current analysis does not yield such a result: the gap lengths between queries in the list of weights are not as regular as $(l_1,\ldots,l_h)$.
\end{LONG}

  \begin{LONG}
  \subsection{Potential Practical Impact of our Results}
  \label{sec:impact-our-results}

  The impact of our faster algorithm on the execution time of optimal prefix free code based techniques should definitely be evaluated further.
  Yet, we expect it to be of little importance in most cases: compressing a sequence $S$ of $|S|$ messages from an input alphabet of size $\nbWeights$ requires not only computing the code (in time $\Oh(\nbWeights)$ using our solution), but also computing the weights of the messages (in time $|S|$), and encoding the sequence $S$ itself using the computed code (in time $\Oh(|S|)$).
  Improving the code computation time will improve on the compression time only in cases where the size $\nbWeights$ of the input alphabet is very large compared to the length $|S|$ of the compressed sequence.
  One such application is the compression of texts in natural language, where the input alphabet is composed of all the natural words~\cite{2000-TOIS-FastAndFlexibleWordSearchingOnCompressedText-MouraNavarroZivianiBaezaYates}.
  Another potential application is the boosting technique from Ferragina\etal~\cite{2005-JACM-BoostingTextualCompressionInOptimalLinearTime-FerraginaGiancarloManziniSciortino}, which divides the input sequence into very short subsequence and computes a prefix free code for each subsequences on the input alphabet of the whole sequence.
  \end{LONG}

\subsection{Perspectives}\label{sec:perspectives}

One could hope for an algorithm which complexity would {match the lower bound} of $\Omega(\nbCodeLengths(1+\entropy(n_1,\ldots,n_h)))$ suggested by information theory, where the output contains $n_i$ codes of length $l_i$, for some integer vector $(l_1,\ldots,l_h)$ of distinct codelengths and some integer $h$ measuring the number of distinct codelengths. Our current analysis does not yield such a result: the gap lengths between queries in the list of weights are not as regular as $(l_1,\ldots,l_h)$, but a refined analysis might.  Minor improvements of our results could be brought by studying the problem in {external memory}, where deferred data structures have also been developed~\cite{2006-JALG-ExternalSelection-Sibeyn,2014-WALCOM-DynamicOnlineMultiSelectionInInternalAndExternalMemory-BarbayGuptaJoRaoSorenson}, or when the {alphabet size} is {larger than two}, as in the original article from Huffman~\cite{1952-IRE-AMethodForTheInstructionOfMinimumRedundancyCodes-Huffman}.

Another promising line of research is given by {variants of the original problem}, such as \textsc{Optimal Bounded Length Prefix Free Codes}, where the maximal length of each word of the prefix free code must be less than or equal to a parameter $l$, while still minimizing the entropy of the code; or such as the \textsc{Order Constrained Prefix Free Codes}, where the order of the words of the codes is constrained to be the same as the order of the weights. Both problems have complexity $O(\nbWeights\lg\nbWeights)$ in the worst case over instances of fixed input size $\nbWeights$, while having linear complexity when all the weights are within a factor of two of each other, exactly as in the original problem.

\begin{LONG}
A logical step would be to study, among the communication solutions using an optimal prefix free code computed offline, which can now afford to compute a new optimal prefix free code more frequently and see their compression performance improved by a faster prefix free code algorithm.
\end{LONG}

\begin{LONG}
Another logical step would be to study, among the compression algorithms computing an optimal prefix free code on each new instance (e.g. \texttt{JPEG}, \texttt{MP3}, \texttt{MPEG}), which ones get a better their time performance by using a faster prefix free code algorithm.  
\end{LONG}

\begin{LONG}
\appendix

\section{Appendix}
\subsection{Relevance of Prefix Free codes in General}
    \label{sec:relev-optim-pref}

    Albeit 60 year old, Huffman's result is still relevant nowadays.
    Optimal Prefix Free codes are used not only for compressed
    encodings: they are also used in the construction of compressed
    data structures for
    permutations~\cite{2009-STACS-CompressedRepresentationsOfPermutationsAndApplications-BarbayNavarro},
    and using similar techniques for sorting faster multisets which
    contains subsequences of consecutive positions already
    ordered~\cite{2009-STACS-CompressedRepresentationsOfPermutationsAndApplications-BarbayNavarro}.

    In 1991, Gary Stix~\cite{1991-SAME-ProfileDavidAHuffman-Stix}
    stated that ``\emph{Large networks of IBM computers use it. So do
      high-definition television, modems and a popular electronic
      device that takes the brain work out of programming a
      videocassette recorder. All these \textbf{digital wonders rely
        on} the results of \textbf{a 40-year-old term paper by a
        modest} Massachusetts Institute of Technology \textbf{graduate
        student}-a data compression scheme known as Huffman encoding
      (...)  \textbf{Products that use Huffman code might fill a
        consumer electronics store}. A recent entry on the shop shelf
      is VCR Plus+, a device that automatically programs a VCR and is
      making its inventors wealthy. (...)  Instead of confronting the
      frustrating process of programming a VCR, the user simply types
      into the small handheld device a numerical code that is printed
      in the television listings. When it is time to record, the
      gadget beams its decoded instructions to the VCR and cable box
      with an infrared beam like those on standard remote-control
      devices. This turns on the VCR, sets it (and the cable box) to
      the proper channel and records for the designated time}.''.

    In 1995,
    Moffat and Katajainen~\cite{1995-WADAS-InPlaceCalculationOfMinimumRedundancyCodes-MoffatKatajainen},
    stated that: ``\emph{The algorithm introduced by Huffman for
      devising minimum-redundancy prefix free codes is well known and
      continues to enjoy \textbf{widespread use in data compression
        programs}. Huffman's method is also a good illustration of the
      greedy paradigm of algorithm design and, at the implementation
      level, provides a useful motivation for the priority queue
      abstract data type. For these reasons Huffman's algorithm enjoys
      \textbf{a prominence enjoyed by only a} relatively \textbf{small
        number of fundamental methods}}''.

    In 1997, Moffat and
    Turpin~\cite{1997-IEEE-OnTheImplementstionOfMinimumRedundsncyPrefixCodes-MoffatTurpin}
    stated that those were ``\emph{one of the \textbf{enduring
        techniques of data compression}. It was used in the venerable
      PACK compression program, authored by Szymanski in 1978, and
      \textbf{remains no less popular today}}''.

    In 2010 Donald E. Knuth was
    quoted~\cite{2010-BOOK-DiscreteMathematics-Chandrasekaran} as
    saying that: ``\emph{Huffman code is one of the
      \textbf{fundamental ideas that people} in computer science and
      data communications \textbf{are using all the time}}''.

    In 2010, the answer to the question ``\emph{What are the
      real-world applications of Huffman coding?}'' on the website
    \texttt{Stacks
      Exchange}~\cite{2010-stacksExchange-realWorldApplicationsHuffman}
    states that ``\emph{Huffman is widely used in all the mainstream
      compression formats that you might encounter - from GZIP, PKZIP
      (winzip etc) and BZIP2, to image formats such as JPEG and
      PNG.}''.

    The Wikipedia website on Huffman coding states that
    ``\emph{Huffman coding today is often used as a "back-end" to some
      other compression method. DEFLATE (PKZIP's algorithm) and
      multimedia codecs such as JPEG and MP3 have a front-end model
      and quantization followed by Huffman
      coding.}''~\cite{2012-wikipedia-HuffmanCoding}.

    Ironically, the pseudo-optimality of this algorithm seems to have
    become part of the folklore of the area, as illustrated by a quote
    from
    Parker\etal~\cite{1999-SIAM-HuffmanCodesSubmodularOptimization-ParkerRam}
    in 1999: ``\emph{While there may be little hope of improving on
      the $\Oh(\nbWeights\log\nbWeights)$ complexity of the Huffman
      algorithm itself, there is still room for improvement in our
      understanding of the algorithm}.''.
  \end{LONG}


\bibliographystyle{abbrv}
\bibliography{/home/jbarbay/EverGoing/WebSite/Studies/ScientificArticles/biblio-Barbay,/home/jbarbay/EverGoing/WebSite/Publications/publications-Barbay}

\end{document}